\newcommand{\ubar}[1]{\underaccent{\bar}{#1}}
\newtheorem{theorem}{Theorem}[section]
\newtheorem{proposition}[theorem]{Proposition}
\newtheorem{lemma}[theorem]{Lemma}
\newtheorem{corollary}[theorem]{Corollary}
\newtheorem{claim}[theorem]{Claim}
\theoremstyle{definition}
\definecolor{backcolour}{rgb}{0.63, 0.79, 0.95}
\lstdefinestyle{mystyle}{
  backgroundcolor=\color{backcolour},
  basicstyle=\ttfamily\footnotesize,
  breakatwhitespace=false,         
  breaklines=true,                 
  captionpos=b,                    
  keepspaces=true,                 
  numbers=left,                    
  numbersep=5pt,                  
  showspaces=false,                
  showstringspaces=false,
  showtabs=false,                  
  tabsize=2
}
\begin{document}
\author{Brian C. Albrecht\thanks{International Center for Law and Economics. Email: \href{mailto:mail@briancalbrecht.com}{mail@briancalbrecht.com}} \and Mark Whitmeyer\thanks{Arizona State University. Email: \href{mailto:mark.whitmeyer@gmail.com}{mark.whitmeyer@gmail.com}\newline We thank Andy Kleiner, Jiangtao Li, David Rahman, Doron Ravid, Jingyi Xue, Kun Zhang, Jidong Zhou, and seminar and conference audiences at the 2021 SEA Conference and Singapore Mangement University for their comments.} }
\title{Comparison Shopping: Learning Before Buying From Duopolists}
\maketitle

\begin{abstract}
We explore a model of duopolistic competition in which consumers learn about the fit of each competitor's product. In equilibrium, consumers comparison shop: they learn only about the \textit{relative} values of the products. When information is cheap, increasing the cost of information decreases consumer welfare; but when information is expensive, this relationship flips. As information frictions vanish, there is a limiting equilibrium that is \textit{ex post} efficient.
\end{abstract}

\newpage

\section{Introduction}


Many markets for consumer goods and services share the following features. First, there are multiple competitors in the market. Second, the firms' products are horizontally differentiated, but how well they suit any given consumer's taste is not clear \textit{ex ante}. Third, there is a lot of information available to consumers about the products--via product specifications, reviews, or expert evaluations--but accessing or parsing it is costly to consumers. Fourth, the sellers are not ignorant of these details; they set prices while taking into explicit account that they compete with other sellers and that consumers are learning about the products. The market for new cars fits this description well; as do the markets for appliances, electronics, and over-the-counter (OTC) assets. Likewise, contractors, pet-sitters, and mechanics all participate in markets with these characteristics.


We study a stylized model that aims to capture these essential features. Our model incorporates flexible, costly learning by a buyer and competition between two sellers with horizontally differentiated products that take binary values. This buyer's ability to learn is extremely flexible: she can acquire \textit{any} signal about her valuations. Signals are costly, but we assume this cost is smooth and strictly increasing in how informative the signal is, and the sellers set prices without learning the buyer's strategy or signal realization. This last feature is especially realistic in markets for services, where the idiosyncrasy of a buyer's need means that the buyer is unaware of what a contractor's (or mechanic's) price quote will be when she is researching them. 

\cite*{ravid2020learning} \hypertarget{RRS}{(RRS)} study a monopoly version of this problem and therein expose the troubling result that with ``learning before trading,'' as information costs go to zero, the limit equilibrium is \textit{ex post} inefficient.\footnote{\hyperlink{RRS}{RRS} assume that the consumer's valuation is distributed according to some prior with a density. As we argue in \(\S\)\ref{sec:rrsbinary}, their inefficiency result persists if the consumer's valuation is binary, as in our paper.} To be more precise, they show that as information costs vanish, the equilibrium converges to one in which the consumer does not purchase from the seller with positive probability, despite the fact that trade is always efficient. Moreover, with positive probability, the consumer does not purchase even though her match value (\textit{ex post}) is higher than the monopolist's price. 

\medskip
In this paper, we prove three main results:

\medskip

\noindent \textbf{1. Comparison shopping:} We prove that in equilibrium, the buyer chooses learning strategies that reveal only the relative value of the firms' products, which we call "comparison shopping." More specifically, if \((x,y)\) is the consumer's valuation for goods from seller one and seller two, she learns only along the line \(y= 1 - x\). As information becomes cheaper, the consumer acquires posteriors further and further from the prior. Eventually, the prior constrains this, which causes the consumer to ``leave'' beliefs at the prior: with strictly positive probability, the consumer is indifferent between the two firms.

\medskip

\noindent \textbf{2. Novel benefit to competition:} We prove that competition partially overturns the \textit{ex post} inefficiency found in the monopoly model of \hyperlink{RRS}{RRS}. We find that as the information costs vanish, there is a limiting equilibrium that is \textit{ex post} efficient.



\medskip

\noindent \textbf{3. Ambiguous effect of frictions on consumer welfare:} We also ask when cheaper information increases consumer welfare. Even without the cost of acquiring information, there is a trade-off for the consumer.\footnote{Many papers explore this trade-off, e.g.,  \cite*{moscarini2001} and \cite*{al}.} If she acquires more information, she has a better match quality with the seller, but that softens competition between the sellers. If she remains ignorant, she can induce stronger price competition, but her lack of information hurts her purchase decision.  When the cost of information is high, consumer welfare is decreasing in the cost of information, but
when information is cheap, this relationship flips. 

The rest of the paper is as follows: \(\S\)\ref{sec:related_work} covers related work; \(\S\)\ref{sec:model} sets up the model; \(\S\)\ref{sec:rrsbinary} presents the binary-state version of \hyperlink{RRS}{RRS} as another benchmark; \(\S\)\ref{sec:pricing} solves the optimal pricing game, conjecturing the optimal learning; \(\S\)\ref{sec:learning} solves for optimal learning, given the equilibrium pricing, and proves the main results; and \(\S\)\ref{extensions} discusses our assumptions, illustrates how our results extend when the consumer's prior has a density (\(\S\)\ref{sec:density}) and sets up the benchmark where sellers observe what the buyers learn (\(\S\)\ref{sec:observable}).

\subsection{Related Work}
\label{sec:related_work}

Beyond the closest paper to ours, \hyperlink{RRS}{RRS}, there is now a sizeable collection of papers that study the important question, \textit{``what can happen in markets under different information structures?''} \cite*{searchinfoprices} asks this in the context of a search market, and \cite*{limitsofprice} characterizes possible market outcomes when there is a single monopolistic seller. \cite*{cournot} look at Cournot competition through this lens, while \cite*{Roes} and \cite*{Condor} study consumer-optimal information and distributions over valuations, respectively. \cite*{za}, \cite*{al}, \cite*{shizhang}, \cite*{elliot}, \cite*{vickzhou}, \cite*{armvick}, and \cite*{hu} all study variations on this theme in markets with imperfect competition. Of special note is \cite*{moscarini2001}, who study price competition by duopolists facing a privately informed buyer. Their consumer's information is exogenous, and they study the pricing-only game between the firms. 

There are a number of papers that explore information acquisition in markets by consumers. Most limit their analysis to the monopolist scenario. \cite*{branco12}, 
 \cite*{branco2016too}, \cite*{martin2017} (in which, notably, the seller is informed), \cite*{pease2018shopping}, and \cite*{lang2019try} all look at how a monopolist sells to consumers who may subsequently acquire information about their valuation for the product. Importantly, in these papers, the consumer observes the firm's price before deciding how and what to learn. This timing is also assumed in the oligopolistic setting of \cite*{matvejka2012simple}.\footnote{The literature on information acquisition in auctions/mechanism design (e.g., \cite*{persico}, \cite*{shi12}, \cite*{kimkoh}, \cite*{mensch}, and \cite*{thereze}) also imposes that the mechanism is publicized--and committed to by the designer--before consumers acquire information.} 
 
 In a contemporaneous paper, \cite*{biglaiser2023price} study a similar model in which duopolistic firms set prices and consumers learn simultaneously. The precise details of their model and the focus of the paper differ from ours. For example, their setting is single-dimensional by construction: a consumer has value \(1\) for one of the firms and \(0\) for the other, whereas we place our analysis in the unit square. This multidimensionality is key to our efficiency result: we would not obtain \textit{ex post} efficiency in the free-information limit if the state space were single-dimensional. They do not study this efficiency question but instead center their analysis around product differentiation and present a nice application to platform design.

 \cite*{jainwhitflex} explore flexible information acquisition by consumers in a large oligopolistic market with search frictions \'{a} la \cite*{asher}. Like this paper, the primary focus there is the case in which the consumer acquires information before observing a firm's price offer. \cite*{ravid20} studies a bilateral bargaining scenario in which the seller's price offer is made prior to the consumer's learning but the consumer must incur a cost to learn about the seller's offer as well. \cite*{pieroth22} extend this model (learning about the qualities and price offers) to a duopoly setting.

Finally, our equilibrium construction requires solving a multidimensional information design problem. To accomplish this, we use recent technical results and insights from \cite*{persuasionduality}, \cite*{yoder2021}, and \cite*{kleiner2023extreme}.

\section{Model}
\label{sec:model}

There are \(2\) \textit{ex ante} identical horizontally differentiated firms, indexed by \(i\). Each of them is selling a product whose value to a representative consumer is an identically distributed binary random variable \(Z_i\) with support on \(\left\{0,1\right\}\) and \(\mathbb{P}\left(Z_i = 1\right) = \frac{1}{2}\) for \(i = 1, 2\). We also assume that the joint distribution of the random variables is symmetric: \(\mathbb{P}\left(Z_1 = \left.1\right| Z_2 = v\right) = \mathbb{P}\left(Z_2 = \left.1\right| Z_1 = v\right)\) for all \(v \in \left\{0,1\right\}\). We denote \(\omega \coloneqq \mathbb{P}\left(Z_2 = 1, Z_1 = 0\right)\) and assume a \textbf{Full support prior} with \(0 < \omega \leq \frac{2}{5}\). Note that this condition is satisfied by the i.i.d. case. In \(\S\) \ref{sec:density}, we allow for a continuous (mean \(\frac{1}{2}\)) prior.

The consumer privately learns about the state of the world at a cost, formalized as follows. Let \(\mathcal{F}\) denote the set of all distributions supported on \(\left[0,1\right]^{2}\) that are fusions of the prior, i.e., that can be obtained by observing some signal. The consumer may acquire any fusion \(F \in \mathcal{F}\) at cost \(C\colon \mathcal{F} \to \mathbb{R}\), where \(C\) satisfies the following assumptions:
\begin{tcolorbox}[colframe=Thistle,colback=white]\textbf{Assumptions on the Cost Functional:} We assume for any \(F \in \mathcal{F}\),
\[C\left(F\right) = \kappa \int c dF \text{ ,}\]
for some strictly convex, thrice differentiable, function \(c\) of the form
\[c\left(x,y\right) = \varphi\left(x\right) + \varphi\left(1-x\right) + \varphi\left(y\right) + \varphi\left(1-y\right)\text{,}\]
and scalar \(\kappa > 0\); with \(\varphi\left(
\frac{1}{2}\right) = 0\), \(\varphi\left(z\right) < \infty\) for all \(z \in \left(0,1\right)\), and \(\lim_{z' \uparrow 1}\left|\varphi'\left(z'\right)\right| = \lim_{z' \downarrow 0}\left|\varphi'\left(z'\right)\right| = \infty\). We also make the technical assumption that \(c_{xxx} \leq 0\).
\end{tcolorbox}
The following function (which we use for any figures) fits our specification:
\[\label{merc}\tag{\Mercury}
    c\left(x,y\right) = x \log x + \left(1-x\right)\log\left(1-x\right) + y\log y + \left(1-y\right)\log\left(1-y\right) - 2 \log \frac{1}{2}\text{.}\]
We also specify that the consumer's utility is additively separable in her value for the good she purchases, its price, and her cost of acquiring information: if she purchases a product with expected value \(x\) at price \(p\) and at posterior \(\left(x,y\right)\), her utility is \(x - p-\kappa c\left(x,y\right)\). We specify \textbf{Full Market Converage}: the consumer has a negligible (or nonexistent) outside option and so will always purchase from one of the firms. For simplicity, we set the marginal costs of production for the two firms to \(0\).

The timing of the game is straightforward: 
\begin{enumerate}[label={(\roman*)},noitemsep,topsep=0pt]
    \item \textbf{Private Learning:} The consumer acquires information about the two products. Neither firm observes this learning.
    \item \textbf{Simultaneous Price Setting:} The firms simultaneously post prices.
    \item \textbf{Purchase Decision} Given posterior value \(\left(x,y\right)\) and prices \(\left(p_1,p_2\right)\), the consumer purchases from Firm \(1\) (\(2\)) if \(x-p_1 > \ (<) \ y-p_2\).
\end{enumerate}
In the first (information acquisition) stage, the consumer solves
\[\max_{F \in \mathcal{F}} \int \left(u-\kappa c\right)dF \text{,}\]
where \(u\colon \left[0,1\right]^{2} \to \mathbb{R}\) is the consumer's reduced form utility from posterior \(\left(x,y\right)\).

\section{The Monopolist Scenario with Binary States}\label{sec:rrsbinary}

Before moving to duopoly sellers, let us go through the binary-state version of the single-seller setting of \hyperlink{RRS}{RRS} to allow for an explicit comparison. The buyer has value \(1\) for the seller's good with probability \(\mu \in \left(0,1\right)\); otherwise, her value is \(0\). We translate our assumptions on the cost functional to the single-firm environment in the natural manner; in particular, now \(\lim_{x \downarrow 0} \left|c\left(x\right)\right|= \lim_{x \uparrow 1} \left|c\left(x\right)\right| = \infty\). Following \hyperlink{RRS}{RRS}, we assume that the consumer has an outside option of \(0\).

To characterize the equilibrium, we define the following distribution of posteriors
\[F\left(x\right) = 1 - \frac{\ubar{x}}{x}, \ \text{on} \ \left[\ubar{x},\bar{x}\right]\text{,}\]
and distribution of prices
\[G\left(p\right) = \kappa \left(c'\left(p\right) - c'\left(\ubar{x}\right)\right), \ \text{on} \ \left[\ubar{x},\bar{x}\right]\text{,}\]
where \(\ubar{x}\) and \(\bar{x}\) are pinned down by two equations. As \(G\) is a CDF, we have
\[\label{eq1}\tag{\(M_1\)}1 = \kappa \left(c'\left(\bar{x}\right) - c'\left(\ubar{x}\right)\right)\text{.}\]
We also have a Bayes-plausibility condition
\[\left(1-F\left(\bar{x}\right)\right)\bar{x} + \int_{\ubar{x}}^{\bar{x}}xdF\left(x\right) = \mu\text{.}\]
After substituting in the functional form of \(F\) and simplifying, we have the the second equation we need to pin down  \(\ubar{x}\) and \(\bar{x}\):
\[\label{eq2}\tag{\(M_2\)}1 + \log{\frac{\bar{x}}{\ubar{x}}} - \frac{\mu}{\ubar{x}} = 0\text{.}\]
We, therefore, have the following proposition, which specifies the equilibrium distributions of posteriors and prices: 
\begin{proposition}\label{monoppropproof}
    If \(\kappa > 0\), the unique equilibrium is for the consumer to randomize over posteriors according to distribution \(F\) given in \ref{eq1} and for the firm to randomize over prices according to distribution \(G\) given in \ref{eq2}.
\end{proposition}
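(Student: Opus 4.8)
The plan is to characterize each player's best reply, read off the restriction it places on the other's strategy, combine these with the two adding-up conditions \ref{eq1} and \ref{eq2} to pin down $(\ubar{x},\bar{x})$, and finish with a sufficiency check. The key structural observation is that this is a Varian-style mutual-indifference equilibrium: against the candidate $G$ the consumer is indifferent among \emph{all} mean-$\mu$ posterior distributions supported on $[\ubar{x},\bar{x}]$, and against the candidate $F$ the firm is indifferent among all prices in $[\ubar{x},\bar{x}]$, so each player's mixture is pinned down only through the other's incentive constraint.

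First I would solve the consumer's problem for an arbitrary price distribution $G$. Her reduced-form value at posterior $x$ is $u(x)=\int_{[0,x)}(x-p)\,dG(p)$, which is convex, equals $0$ below $\min\supp G$, is affine with unit slope above $\max\supp G$, and satisfies $u'(x)=G(x)$. Her learning problem $\max_{F}\int(u-\kappa c)\,dF$ subject to $\int x\,dF=\mu$ is linear in $F$, so standard concavification reasoning applies: any optimal $F$ is supported on the contact set of $u-\kappa c$ with its concave envelope, and that envelope is affine on $\conv(\supp F)\ni\mu$. Writing $[\ubar{x},\bar{x}]$ for this interval and $a+bx$ for the affine piece, on $[\ubar{x},\bar{x}]$ we need $u-\kappa c\equiv a+bx$; differentiating gives $G(x)=b+\kappa c'(x)$, and since $G$ is a CDF vanishing at the bottom of its support we get $b=-\kappa c'(\ubar{x})$ and hence $G(x)=\kappa\bigl(c'(x)-c'(\ubar{x})\bigr)$, with $G(\bar{x})=1$ delivering \ref{eq1} --- precisely the claimed price distribution.

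Next I would solve the firm's problem given posteriors drawn from an $F$ supported on $[\ubar{x},\bar{x}]$: its profit from price $p$ is $p\,(1-F(p))$, a price below $\ubar{x}$ is dominated and one above $\bar{x}$ earns nothing, so its support lies in $[\ubar{x},\bar{x}]$, and to randomize over that interval it must be indifferent there, i.e.\ $p\,(1-F(p))$ is constant on it. Because $F$ cannot have an atom at $\ubar{x}$ in equilibrium (else the firm would strictly gain by undercutting $\ubar{x}$, with no best reply at $\ubar{x}$ itself), the constant is $\ubar{x}$, so $1-F(p)=\ubar{x}/p$ on $[\ubar{x},\bar{x}]$, the claimed posterior distribution; imposing Bayes plausibility $\int x\,dF=\mu$ on it and simplifying gives \ref{eq2}. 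Then I would show \ref{eq1}--\ref{eq2} has a unique solution in $(0,1)^2$: since $c'$ is a strictly increasing bijection of $(0,1)$ onto $\mathbb{R}$ (strict convexity plus the endpoint conditions on $c$), \ref{eq1} defines $\bar{x}$ as a smooth increasing function of $\ubar{x}$ valued in $(\ubar{x},1)$; plugging this into the left side of \ref{eq2} gives a function of $\ubar{x}$ that tends to $-\infty$ as $\ubar{x}\downarrow0$ and to $1-\mu>0$ as $\ubar{x}\uparrow1$, and evaluating its derivative at any root (using the root equation to eliminate the $\mu/\ubar{x}^{2}$ term) shows that derivative is strictly positive, so the root is unique.

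It remains to confirm these mutual best replies form the unique equilibrium. For sufficiency I would check that, with the candidate $G$, the function $u-\kappa c$ is globally concave on $[0,1]$ and affine exactly on $[\ubar{x},\bar{x}]$: on $[0,\ubar{x}]$ and $[\bar{x},1]$ it is strictly concave ($-\kappa c$, respectively an affine function minus $\kappa c$), and \ref{eq1} makes the three pieces meet $C^{1}$-smoothly at $\ubar{x}$ and $\bar{x}$, so its derivative is non-increasing throughout; hence the consumer's optimal replies are exactly the mean-$\mu$ distributions on $[\ubar{x},\bar{x}]$ (the candidate $F$ among them), and a direct check shows no firm price beats the profit $\ubar{x}$. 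For uniqueness I would argue on an arbitrary equilibrium that the firm's optimality forbids atoms of $F$ except possibly at the top of $\supp F$ and forbids gaps in $\supp F$ (a gap would force $G$ to place no mass on it, making $u$ affine and hence $u-\kappa c$ strictly concave across the gap --- contradicting the consumer's envelope condition, which requires $u-\kappa c$ to lie weakly below an affine function with equality at the gap's endpoints), so $\supp F=[\ubar{x},\bar{x}]$ and $1-F(p)=\ubar{x}/p$ there, while the consumer's envelope condition again forces $G(x)=\kappa(c'(x)-c'(\ubar{x}))$ on $[\ubar{x},\bar{x}]$; matching supports via \ref{eq1} and Bayes plausibility via \ref{eq2} then leaves only the solution found above. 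I expect the support-and-atom bookkeeping in the uniqueness step to be the main obstacle; the consumer's characterization is routine concavification and the system \ref{eq1}--\ref{eq2} is handled by a short single-crossing computation.
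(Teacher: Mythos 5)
Your proposal is correct and takes essentially the same route as the paper: the paper likewise verifies that the firm earns the constant profit \(\ubar{x}\) at every price in \(\left[\ubar{x},\bar{x}\right]\) (and less elsewhere) and that the consumer's payoff \(u-\kappa c\) is affine on \(\left[\ubar{x},\bar{x}\right]\) and strictly concave outside it, so the two strategies are mutual best replies. The only substantive difference is that the paper delegates equilibrium uniqueness to a citation (Theorem 5.2 of Jain--Whitmeyer) rather than arguing it directly, whereas you sketch the standard no-atoms/no-gaps argument yourself; your sketch, and your added verification that the system \ref{eq1}--\ref{eq2} has a unique root, are both sound.
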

\noindent  The proof is in \(\S\)\ref{proof:monopoly_eq}. We are interested in the equilibrium as \(\kappa\) goes to zero: 
\begin{proposition}\label{monopcsproof}
    As \(\kappa \downarrow 0\), the firm's pricing strategy converges to the degenerate distribution on price \(1\) and the consumer's distribution over valuations converges to the truncated Pareto distribution \(F\) specified above with \(\ubar{x} = a\) and \(\bar{x} = 1\), where \(a\) is the unique solution to \(1 - \log(a) - \frac{\mu}{a} = 0\) in \(\left(0,\mu\right)\).
\end{proposition}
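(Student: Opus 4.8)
The plan is to lean entirely on Proposition~\ref{monoppropproof}: for each \(\kappa>0\) that result gives a (unique) pair \(\bigl(\ubar{x}(\kappa),\bar{x}(\kappa)\bigr)\) solving \ref{eq1}--\ref{eq2}, and the equilibrium distributions \(F,G\) are built from it. So the whole task reduces to showing \(\ubar{x}(\kappa)\to a\) and \(\bar{x}(\kappa)\to 1\) as \(\kappa\downarrow 0\) and then reading off the limiting shapes of \(F\) and \(G\).

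First I would extract two facts from the pinning-down equations. Since the equilibrium \(F\) is nondegenerate for \(\kappa>0\) — otherwise \ref{eq1} would read \(1=\kappa\cdot 0\) — Bayes-plausibility forces \(\ubar{x}(\kappa)<\mu<\bar{x}(\kappa)\). Solving \ref{eq2} for the upper endpoint gives \(\bar{x}=h(\ubar{x})\) with \(h(t):=t\,e^{\mu/t-1}\), and since \(\tfrac{d}{dt}\log h(t)=(t-\mu)/t^{2}<0\) on \((0,\mu)\), the map \(h\) is continuous and strictly decreasing there, with \(h(\mu)=\mu\) and \(h(0^{+})=+\infty\). Because \ref{eq1} requires \(c'(\bar{x}(\kappa))\) to be finite we must have \(\bar{x}(\kappa)<1\); as \(h\) is decreasing and \(h(a)=1\) is exactly the equation \(1-\log a-\mu/a=0\), this yields the key a priori bound \(\ubar{x}(\kappa)>a\) for all \(\kappa>0\). (Uniqueness of \(a\) in \((0,\mu)\) is immediate: \(t\mapsto 1-\log t-\mu/t\) has derivative \((\mu-t)/t^{2}>0\) on \((0,\mu)\), tends to \(-\infty\) as \(t\downarrow 0\), and equals \(-\log\mu>0\) at \(t=\mu\).)

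Next I would pass to the limit by compactness. Fix any sequence \(\kappa_{n}\downarrow 0\); by the previous step \(\bigl(\ubar{x}(\kappa_{n}),\bar{x}(\kappa_{n})\bigr)\) lies in the compact box \([a,\mu]\times[\mu,1]\), so extract a subsequence converging to some \((\ubar{x}_{\infty},\bar{x}_{\infty})\). Since \(\ubar{x}(\kappa_{n})\in(a,\mu)\) stays bounded away from both \(0\) and \(1\), \(c'(\ubar{x}(\kappa_{n}))\) is bounded, so \ref{eq1} forces \(c'(\bar{x}(\kappa_{n}))=\tfrac1{\kappa_{n}}+c'(\ubar{x}(\kappa_{n}))\to+\infty\), and since \(c'\) is continuous and finite on \((0,1)\), \(\bar{x}_{\infty}=1\). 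Then passing to the limit in \ref{eq2} — legitimate because \(\ubar{x}_{\infty}\ge a>0\) — gives \(1-\log\ubar{x}_{\infty}-\mu/\ubar{x}_{\infty}=0\), i.e.\ \(\ubar{x}_{\infty}=a\). Every subsequential limit equals \((a,1)\), and since Proposition~\ref{monoppropproof} makes \(\ubar{x}(\kappa),\bar{x}(\kappa)\) single-valued, \(\ubar{x}(\kappa)\to a\) and \(\bar{x}(\kappa)\to 1\). To finish, \(F_{\kappa}(x)=1-\ubar{x}(\kappa)/x\) on \([\ubar{x}(\kappa),\bar{x}(\kappa)]\) converges pointwise to the CDF that is \(0\) on \([0,a)\), equals \(1-a/x\) on \([a,1)\), and jumps to \(1\) at \(x=1\) — the claimed truncated Pareto law on \([a,1]\), with atom \(a\) at the top — hence \(F_{\kappa}\) converges weakly to it; and \(G_{\kappa}(p)=\kappa\bigl(c'(p)-c'(\ubar{x}(\kappa))\bigr)\) on the support, so for any fixed \(p\in(0,1)\), which eventually lies in the widening support, \(G_{\kappa}(p)=\kappa\bigl(c'(p)-c'(\ubar{x}(\kappa))\bigr)\to 0\) since \(c'(p)\) is a fixed finite number and \(c'(\ubar{x}(\kappa))\to c'(a)\); thus \(G_{\kappa}\) converges weakly to the point mass on price \(1\).

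I expect the only step with real content to be the a priori bound \(\ubar{x}(\kappa)>a\) (equivalently \(\bar{x}(\kappa)<1\)): it is what keeps \(c'(\ubar{x}(\kappa))\) bounded and so forces \(c'(\bar{x}(\kappa))\to\infty\), which drives the whole limit. If one declined to establish it first, the same conclusion would require ruling out the degenerate possibility \(\ubar{x}(\kappa_{n})\to 0\) directly, which rests on the rate comparison \(\mu/\ubar{x}\gg\log(1/\ubar{x})\) as \(\ubar{x}\downarrow 0\) applied to \ref{eq2}; everything else is continuity, monotonicity, and the intermediate value theorem.
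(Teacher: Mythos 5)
Your proposal is correct and follows essentially the same route as the paper: both hinge on the a priori bound \(\ubar{x}(\kappa) \geq a\) extracted from the monotone relationship between \(\ubar{x}\) and \(\bar{x}\) along \ref{eq2} (the paper via the implicit function theorem, you by explicitly inverting to \(\bar{x}=h(\ubar{x})\)), combined with \ref{eq1} forcing \(c'(\bar{x})\to\infty\) and hence \(\bar{x}\to 1\). Your version merely fills in details the paper leaves implicit (uniqueness of \(a\), the subsequence argument, and the weak convergence of \(F\)), so no further comparison is needed.
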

The proof is in \(\S\)\ref{proof:monopoly_cor}. Crucially, as in the main setting of \hyperlink{RRS}{RRS}, even if the prior is binary, the limiting equilibrium outcome as frictions vanish is not \textit{ex post} efficient. With strictly positive probability, the consumer does not purchase when her true valuation is \(1\), as with strictly positive probability her posterior is some value in \(\left[a,1\right)\), which is below the price of \(1\).

\section{Two Pricing Games Between Firms}
\label{sec:pricing}

 Eventually, we will characterize the symmetric equilibria of the game with flexible learning by the consumer and price-setting by the firms. To do that, we solve the game by backward induction, and so we start by examining two games of pure price setting. In this ``second stage,'' we are holding fixed the consumer's learning; \textit{viz.}, fixing some conjectured distribution over consumer valuations for the two firms' products. With this conjecture fixed, we solve for the equilibrium pricing strategy. In \(\S\)\ref{sec:learning}, we then show these two potential distributions of beliefs are correct equilibrium distributions of valuations, given the consumer's correct conjectures of the firms' pricing strategies. 

\subsection{Value Distribution With Symmetric Two-Point Support}

Suppose the consumer has a symmetric distribution over valuations for the firms' products with support on \(\left(0,\lambda\right)\) and \(\left(\lambda, 0\right)\), each with probability \(\frac{1}{2}\), and where \(\lambda > 0\). Getting slightly ahead of ourselves, these beliefs arise from comparison shopping where the consumer learns only that she likes one of the goods more than the other by $\lambda$. This game is a modified, full-market-coverage, version of \cite*{moscarini2001}. By standard, undercutting arguments with finite valuation types, firms must randomize in any equilibrium.
\begin{lemma}\label{eq_rand_2types}
There exist no symmetric equilibria in which firms do not randomize over prices. 
\end{lemma}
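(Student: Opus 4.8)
The plan is to argue by contradiction. A symmetric equilibrium in which firms do not randomize is simply a profile in which both firms post one and the same deterministic price $p \geq 0$, so there is a single scalar to rule out. First I would pin down demand at such a candidate profile: when both firms charge $p$, the valuation type $(\lambda,0)$ strictly prefers Firm $1$ because $\lambda - p > -p$ (as $\lambda > 0$), and symmetrically the type $(0,\lambda)$ strictly prefers Firm $2$; since each type has probability $\tfrac12$, each firm serves a mass $\tfrac12$ and earns profit $\tfrac{p}{2}$.

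Next I would construct the profitable deviation. Suppose Firm $1$ unilaterally raises its price to $p+\varepsilon$ for some $\varepsilon \in (0,\lambda)$ — a nonempty interval precisely because $\lambda > 0$. The key point is that the segment already buying from Firm $1$ is ``locked in'' by a margin of $\lambda$: the type $(\lambda,0)$ continues to buy from Firm $1$ since $\lambda - (p+\varepsilon) > -p \iff \varepsilon < \lambda$, while the type $(0,\lambda)$ certainly does not move toward the now-more-expensive firm. (Full market coverage matters here only to ensure the locked-in consumer does not exit rather than switch.) Hence Firm $1$'s demand is unchanged at $\tfrac12$, so its profit rises to $\tfrac{p+\varepsilon}{2} > \tfrac{p}{2}$, contradicting equilibrium. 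Since $p$ was arbitrary — including $p=0$, where any $\varepsilon \in (0,\lambda)$ already yields strictly positive profit in place of zero — no symmetric non-randomizing profile is an equilibrium.

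The argument is short, and I expect no substantive obstacle. The only points needing care are (i) observing that ``symmetric and non-randomizing'' collapses the analysis to the single parameter $p$, and (ii) verifying that the locked-in segment does not defect over the entire range $\varepsilon \in (0,\lambda)$ — which is exactly where the discreteness of the valuation support is used, since with a continuum of types a marginal price increase would shed demand continuously and this overcutting deviation would fail (indeed this is why pure-strategy equilibria survive in the Hotelling-type model but not here). Tie-breaking at $\varepsilon = \lambda$ is irrelevant because the deviation only uses $\varepsilon$ strictly below $\lambda$.
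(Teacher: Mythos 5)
Your proposal is correct and is the same argument as the paper's, which simply notes that at a symmetric pure-strategy profile each firm's demand is locally perfectly inelastic, so a small price increase is a profitable deviation. You spell out the details (the $\lambda$ margin locking in the served type, the role of full market coverage, the $p=0$ case) that the paper leaves implicit.
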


\noindent The proof is in \(\S\)\ref{proof:eq_rand_2types}. As a result, we search for an equilibrium in which firms randomize over prices. We define the following piecewise distribution of prices, which is the equilibrium distribution:
\[\label{2firmeqpricing}\tag{\Earth}\Gamma\left(p\right) = \begin{cases}
\Gamma_L \coloneqq \frac{p - \sqrt{2}\lambda}{\lambda + p}, \quad &\sqrt{2}\lambda \leq p \leq \left(1+\sqrt{2}\right)\lambda\\
\Gamma_H \coloneqq \frac{\left(3+\sqrt{2}\right)\lambda - 2 p}{\lambda - p}, \quad &\left(1+\sqrt{2}\right)\lambda \leq p \leq \left(2+\sqrt{2}\right)\lambda\\
\end{cases} \text{ .}\]

\begin{proposition}\label{eqdist}
In the price-setting game of this subsection, the unique symmetric equilibrium is for each firm to choose the distribution over prices \(\Gamma\) specified in Expression \ref{2firmeqpricing}.
\end{proposition}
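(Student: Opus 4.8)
The plan is to handle existence and uniqueness separately, in both cases working with the profit function $p\mapsto pD(p)$, where $D(p)$ is a firm's probability of making the sale when it charges $p$ and its rival draws its price from a CDF $\Gamma$. Against the two-point valuation distribution, a firm charging $p$ wins in the realization where its own good is worth $\lambda$ iff the rival charges more than $p-\lambda$, and in the realization where its own good is worth $0$ iff the rival charges more than $p+\lambda$; so for an atomless rival distribution,
\[
D(p) = 1-\tfrac12\Gamma(p+\lambda)-\tfrac12\Gamma(p-\lambda),
\]
with the convention $\Gamma\equiv 0$ below $\sqrt2\lambda$ and $\Gamma\equiv 1$ above $(2+\sqrt2)\lambda$. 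Note $(1+\sqrt2)\lambda$ is both the midpoint of $[\sqrt2\lambda,(2+\sqrt2)\lambda]$ and equals $\sqrt2\lambda+\lambda = (2+\sqrt2)\lambda-\lambda$, so the kink in $\Gamma$ sits exactly $\lambda$ above the left endpoint and $\lambda$ below the right endpoint.

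For existence I would substitute the two branches of the proposed $\Gamma$ into $pD(p)$. For $p$ in the lower half $[\sqrt2\lambda,(1+\sqrt2)\lambda]$ we have $p-\lambda\le\sqrt2\lambda$ so $\Gamma(p-\lambda)=0$, while $p+\lambda$ lies in the upper half, so $\Gamma(p+\lambda)=\Gamma_H(p+\lambda)$; plugging in and simplifying gives $pD(p)\equiv\tfrac12(1+\sqrt2)\lambda$. Symmetrically, for $p$ in the upper half $p+\lambda$ exceeds the support so $\Gamma(p+\lambda)=1$, while $\Gamma(p-\lambda)=\Gamma_L(p-\lambda)$, and again $pD(p)\equiv\tfrac12(1+\sqrt2)\lambda =: \pi^\star$. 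So every price in $[\sqrt2\lambda,(2+\sqrt2)\lambda]$ earns $\pi^\star$. It then remains to check that $\Gamma$ is a bona fide atomless CDF — each branch is strictly increasing, the two branches agree at the kink (both equal $1-1/\sqrt2$), and $\Gamma$ hits $0$ and $1$ at the endpoints — and that no price outside the support is a profitable deviation. For $p<\sqrt2\lambda$ one has $pD(p)=p$ when $p\le(\sqrt2-1)\lambda$ and $pD(p)=\tfrac{p(p+(3+\sqrt2)\lambda)}{2(p+2\lambda)}$ for $(\sqrt2-1)\lambda<p<\sqrt2\lambda$; a one-line derivative computation shows this is increasing in $p$, hence strictly below its value $\pi^\star$ at $\sqrt2\lambda$. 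For $p>(2+\sqrt2)\lambda$ one gets $pD(p)=\tfrac{p((3+\sqrt2)\lambda-p)}{2(p-2\lambda)}$ for $(2+\sqrt2)\lambda<p<(3+\sqrt2)\lambda$, which is decreasing (its numerator-after-differentiation is $-(p^2-4\lambda p+(6+2\sqrt2)\lambda^2)<0$ by a negative discriminant), and $pD(p)=0$ once $p\ge(3+\sqrt2)\lambda$; and profit is negative for $p<0$. So all deviations are dominated.

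For uniqueness, fix any symmetric equilibrium, with CDF $\Gamma$ and constant-on-support profit $\pi^\star$. First $\pi^\star>0$, since a firm charging a small $p>0$ has $D(p)\ge\tfrac12$. Next, a standard undercutting/atom-splitting argument shows $\Gamma$ has no atoms and its support is an interval $[\underline p,\overline p]$; atomlessness is what makes the ``ties'' in the purchase rule — which here arise exactly when the two prices are equal or differ by exactly $\lambda$ — a probability-zero event, so that the indifference condition is
\[
p\bigl(1-\tfrac12\Gamma(p+\lambda)-\tfrac12\Gamma(p-\lambda)\bigr)=\pi^\star \qquad \text{for all } p\in[\underline p,\overline p].
\]
One also shows $\overline p-\underline p=2\lambda$: any nondegenerate interval of prices $p$ with $p-\lambda\le\underline p$ and $p+\lambda\ge\overline p$ simultaneously would force $\tfrac12 p\equiv\pi^\star$ there, impossible, which rules out width $<2\lambda$; and width $>2\lambda$ must be excluded by examining the genuine functional equation that the indifference condition becomes on the ``middle'' region $[\underline p+\lambda,\overline p-\lambda]$. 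Granting $\overline p=\underline p+2\lambda$, the condition decouples: on $[\underline p,\underline p+\lambda]$ we have $\Gamma(p-\lambda)=0$, so $\Gamma(p+\lambda)=2-2\pi^\star/p$, which determines $\Gamma$ on the upper half; on $[\underline p+\lambda,\overline p]$ we have $\Gamma(p+\lambda)=1$, so $\Gamma(p-\lambda)=1-2\pi^\star/p$, which determines $\Gamma$ on the lower half. Imposing $\Gamma(\underline p)=0$ and $\Gamma(\overline p)=1$ gives $\pi^\star=(\underline p+\lambda)/2$ and $\overline p-\lambda=\underline p+\lambda$, and requiring the two formulas to agree at the midpoint $\underline p+\lambda$ yields $\underline p^2=2\lambda^2$, i.e. $\underline p=\sqrt2\lambda$, $\overline p=(2+\sqrt2)\lambda$, $\pi^\star=\tfrac12(1+\sqrt2)\lambda$; substituting back reproduces $\Gamma_L$ on the lower half and $\Gamma_H$ on the upper half.

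The main obstacle is the bundle of structural lemmas behind uniqueness — no atoms, connected support, and in particular the width identity $\overline p-\underline p=2\lambda$ — since it is precisely the width identity that collapses the indifference condition from a difference-functional equation into the pair of decoupled algebraic identities above, and the no-atom argument must be run with care because (unlike in textbook Bertrand) a tie in prices here need not produce a tie in the consumer's net utilities. Once those lemmas are established, everything else is a short forced computation, and the existence check is entirely mechanical.
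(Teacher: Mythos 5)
Your existence argument is correct and essentially identical to the paper's: the same guess-and-verify structure, the same demand decomposition (the paper writes the profit as \(\tfrac{p}{2}[2-\Gamma_H(p+\lambda)]\) on the lower half and \(\tfrac{p}{2}(1-\Gamma_L(p-\lambda))\) on the upper half, which is exactly your \(pD(p)\) after substituting \(\Gamma(p-\lambda)=0\) resp.\ \(\Gamma(p+\lambda)=1\)), the same constant profit \(\tfrac12(1+\sqrt2)\lambda\), and the same three out-of-support deviation checks, all of which compute out correctly. For uniqueness the paper itself gives no argument, only a citation to the atomless-equilibrium uniqueness proof of Proposition 3 in \cite{moscarini2001}; your sketch (no atoms, interval support, width exactly \(2\lambda\), then the decoupled algebra forcing \(\underline p=\sqrt2\lambda\)) is the right skeleton and the forced computation is correct, but the structural lemmas you flag — in particular excluding support width greater than \(2\lambda\) — remain unproven in your write-up, leaving you at roughly the same level of completeness as the paper on that half of the claim.
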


The proof is in \(\S\)\ref{proof:eqdist}. This equilibrium distribution arises as the distribution that generates unit-elastic demand for the other firm (given the conjectured uncertainty), rendering it willing to randomize.

\subsection{Value Distribution With Symmetric Three-Point Support}

Now suppose the consumer has a symmetric distribution over valuations for the firms' products with support on \(3\) points as follows. For \(2\) points, her valuation for one of the firms is \(\lambda > 0\) greater than that of the other firm, just as in the previous subsection. At the \(3\)rd point, the consumer is indifferent between each of the firms.

After normalization, we specify that with probability \(q \leq \frac{2}{5}\) the consumer's vector of valuations for the \(2\) firms is \(\left(0, \lambda\right)\) and with probability \(1-2 q\) the consumer's vector of valuations is \((0,0)\). As in the case with two-point support, any equilibrium must involve randomization over prices.

\begin{lemma}\label{eq_rand_3types}
There exist no symmetric equilibria in which firms do not randomize over prices. Moreover, firms' distributions over prices cannot have atoms.
\end{lemma}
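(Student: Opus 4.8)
The plan is to mirror the structure of the two-point case (Lemma~\ref{eq_rand_2types}) but handle the extra complication introduced by the mass $1-2q$ of ``indifferent'' consumers, who buy from whichever firm is cheaper (breaking ties by a fair coin). First I would set up notation: let $\Gamma_i$ denote firm $i$'s (mixed) pricing strategy, with support contained in some interval, and note that any price a firm charges must lie in $[0, \lambda]$ up to the relevant shift, since charging more than the maximum willingness-to-pay of every consumer type it could serve is strictly dominated (full market coverage guarantees a firm always has the option of pricing low enough to capture the indifferent mass plus its ``captive'' high-value customers). I would also record the demand faced by firm $i$ at price $p$: it consists of (a) the indifferent types, won iff $p < p_j$ (the opponent's realized price), contributing $(1-2q)\cdot \mathbf{1}\{p<p_j\}$ (with the usual half-share on ties), plus (b) the types for whom firm $i$ is the $\lambda$-preferred firm, won iff $p < p_j + \lambda$, and (c) the types for whom firm $i$ is the $\lambda$-disfavored firm, won iff $p < p_j - \lambda$.

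For the \emph{no pure equilibrium} part I would argue exactly as in the duopoly-undercutting classic: if firm $j$ plays a pure price $p_j$, then firm $i$'s best response is either to undercut slightly (grabbing the indifferent mass discontinuously) or to jump up to exploit its captive $\lambda$-favored demand; a symmetric pure profile $p_1=p_2=p$ cannot be optimal because a downward deviation of size $\varepsilon$ yields a first-order gain (capturing $\tfrac12(1-2q)$ extra demand) against only a second-order loss, unless $p$ is already at the top of the feasible range, in which case an upward-or-equal configuration is not an equilibrium either since each firm would want to shade down. This is routine and I would dispatch it quickly, citing the same ``standard undercutting arguments with finite valuation types'' invoked before Lemma~\ref{eq_rand_2types}.

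The substantive new content is \textbf{no atoms}. Here I would run the standard Bertrand-type atom-exclusion argument, but carefully, because with three types the demand function has \emph{two} jump points ($p_j$ and $p_j\pm\lambda$), so an atom of $\Gamma_j$ at a point $\hat p$ creates a discontinuity in firm $i$'s payoff at three locations: $\hat p$, $\hat p+\lambda$, and $\hat p-\lambda$. The key step: suppose $\Gamma_j$ has an atom at $\hat p$. At $p=\hat p$ itself, firm $i$ splits the indifferent mass; by moving to $\hat p - \varepsilon$ it captures all of it, a discrete gain at vanishing cost, so firm $i$ puts no mass in a right-neighborhood $[\hat p, \hat p + \delta)$ — and in particular not \emph{at} $\hat p$. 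Thus the two firms cannot share the atom location; so at least one of the three discontinuity points of firm $i$'s payoff is hit by firm $j$'s atom while firm $i$ places positive density there, and I would derive a contradiction by showing firm $i$ strictly prefers to shift that density just below the discontinuity. I expect the main obstacle to be bookkeeping at the \emph{boundary} of the support and the interaction of the $\lambda$-shifted jump with the price cap: one must rule out the degenerate possibility that the only atom sits at the very top of the support where no profitable leftward shift of the relevant demand component exists. I would close that gap by using symmetry (both firms face the same cap, so if firm $j$'s atom is at the top then firm $i$'s upper support endpoint coincides with it, and the $+\lambda$-shifted demand jump then lies strictly above the cap and is irrelevant, forcing the contradiction back onto the indifferent-mass jump at $\hat p$, which is always present because $1-2q>0$ since $q\le \tfrac25<\tfrac12$).
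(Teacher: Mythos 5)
Your proposal is correct and rests on the same central idea as the paper: the strictly positive mass $1-2q>0$ of indifferent types makes undercutting any pure price or any atom discretely profitable, which is precisely the content of the paper's one-sentence proof. The elaborate bookkeeping in your no-atoms step (the three discontinuity points, the "firms cannot share the atom location" branch) is unnecessary under the symmetry restriction—if one firm has an atom at $\hat p$ then so does the other, and your own observation that a firm strictly prefers $\hat p - \varepsilon$ to $\hat p$ already delivers the contradiction.
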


The proof is in \(\S\)\ref{proof:eq_rand_3types}. Again, we search for an equilibrium in which firms randomize over prices. Define the distribution \(\Phi\left(p\right)\)
\[\tag{\Mars}\label{priceset2}\Phi\left(p\right) \coloneqq \frac{\left(1-q\right)\left(p\left(1-2q\right)-\lambda q\right)}{p\left(1-2q\right)^{2}} \quad \text{on} \quad \left[\frac{q}{1-2q}\lambda, \frac{q}{1-2q}\lambda + \lambda\right]  \text{.}\]
\begin{proposition}\label{eq_dist_3types}
    In the price-setting game of this subsection, it is an equilibrium for each firm to choose the distribution over prices \(\Phi\) specified in Expression \ref{priceset2}.
\end{proposition}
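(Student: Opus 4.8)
The plan is to verify directly that the profile in which both firms draw prices independently from \(\Phi\) is a symmetric Nash equilibrium of this pricing game. Since the proposition only claims this is \emph{an} equilibrium, it suffices to check, holding the rival fixed at \(\Phi\): (a) every price in the support \([\ubar{p},\bar{p}]\) — writing \(\ubar{p}\coloneqq\frac{q}{1-2q}\lambda\) and \(\bar{p}\coloneqq\ubar{p}+\lambda=\frac{1-q}{1-2q}\lambda\) — earns the same expected profit; and (b) no price outside \([\ubar{p},\bar{p}]\) earns strictly more. I would first record the easy fact that \(\Phi\) is a bona fide atomless CDF on \([\ubar{p},\bar{p}]\): rewriting \(\Phi(p)=\frac{1-q}{1-2q}-\frac{q(1-q)\lambda}{(1-2q)^2 p}\) makes it immediate that \(\Phi(\ubar{p})=0\), \(\Phi(\bar{p})=1\), and \(\Phi'(p)=\frac{q(1-q)\lambda}{(1-2q)^2 p^2}>0\) (using \(0<q<\tfrac12\)), so \(\Phi\) is continuous and strictly increasing there; in particular there are no atoms to worry about in tie probabilities.

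The structural observation driving everything is that the support of \(\Phi\) has width exactly \(\lambda\). So if firm \(i\) posts \(p\in[\ubar{p},\bar{p}]\) while the rival draws \(p_{-i}\sim\Phi\), then \(p-p_{-i}\in[-\lambda,\lambda]\) almost surely; comparing the consumers' reduced-form utilities, firm \(i\) then almost surely (i) wins its entire mass \(q\) of ``fans'' (the \((\lambda,0)\)-types for firm \(1\), symmetrically for firm \(2\)), (ii) wins none of the rival's fans, and (iii) wins the indifferent mass \(1-2q\) exactly when it undercuts, i.e. with probability \(1-\Phi(p)\). Hence expected profit on the support is \(\pi(p)=p\bigl[q+(1-2q)(1-\Phi(p))\bigr]\). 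I would evaluate at \(p=\bar{p}\) to get \(\pi(\bar{p})=q\bar{p}=\frac{q(1-q)}{1-2q}\lambda=:\Pi\), and then substitute the closed form of \(\Phi\) to confirm \(\pi(p)\equiv\Pi\) for all \(p\in[\ubar{p},\bar{p}]\) — establishing (a). This is precisely the ``unit-elastic residual demand'' property that \(\Phi\) is constructed to produce.

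For (b) I would rule out off-support deviations, restricting without loss to \(p\ge0\) (a firm can secure \(\Pi>0\), so never prices at or below \(0\)). The width-\(\lambda\) structure again pins down demand. Upward deviations \(p>\bar{p}\): the firm forfeits the indifferent consumers and the rival's fans entirely and keeps its own fans only when \(p_{-i}>p-\lambda\), so demand is \(q(1-\Phi(p-\lambda))\) — with \(\Phi\equiv1\) above \(\bar{p}\), this is \(0\) once \(p\ge\bar{p}+\lambda\). The inequality \(pq(1-\Phi(p-\lambda))\le\Pi\), after the substitution \(t=p-\lambda\), becomes a downward-opening quadratic in \(t\) with roots \(\ubar{p}\) and \(-\frac{(1-q)\lambda}{q}<0\), hence nonpositive for \(t>\ubar{p}\), which is the relevant range. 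Downward deviations \(p<\ubar{p}\): now the firm wins all indifferent consumers and all its own fans, plus the rival's fans whenever \(p_{-i}>p+\lambda\), so demand is \(1-q\Phi(p+\lambda)\) — with \(\Phi\equiv0\) below \(\ubar{p}\), this is \(1\) once \(p\le\ubar{p}-\lambda\). I would show \(p\bigl(1-q\Phi(p+\lambda)\bigr)\le\Pi\) by splitting into \(p\in[0,\ubar{p}-\lambda]\) (where the bound reduces to \(\ubar{p}-\lambda\le\Pi\), i.e. \(q^2+2q-1\le0\), i.e. \(q\le\sqrt2-1\)) and \(p\in[\max\{0,\ubar{p}-\lambda\},\ubar{p})\) (another polynomial inequality, binding as \(p\uparrow\ubar{p}\)).

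The hard part is this last step: one must correctly carve the off-support region into these sub-cases — the subtlety being the intermediate band where a deviating firm poaches \emph{some but not all} of the rival's fans — work out demand in each, and then push through the resulting polynomial inequalities. The computations are elementary but delicate, and it is exactly here (and nowhere else) that the standing hypothesis \(q\le\tfrac25\) is used: \(\tfrac25<\sqrt2-1\approx0.414\) is what makes the worst downward deviation unprofitable. Once these inequalities are in hand, the firm's payoff equals \(\Pi\) everywhere on the support and is no larger off it, so \(\Phi\) is a best response to itself and the symmetric profile is an equilibrium.
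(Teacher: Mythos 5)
Your proposal is correct and takes essentially the same approach as the paper: verify the constant profit \(\Pi=\frac{q(1-q)}{1-2q}\lambda\) on the support of \(\Phi\), then check that upward deviations (demand \(q(1-\Phi(p-\lambda))\)) and downward deviations (demand \(1-q\Phi(p+\lambda)\)) cannot beat \(\Pi\). One small correction to your last paragraph: the binding restriction on \(q\) comes not from the corner sub-case \(p\le\ubar{p}-\lambda\) (threshold \(\sqrt{2}-1\)) but from the intermediate band \(p\in[\ubar{p}-\lambda,\ubar{p})\) that you leave as ``another polynomial inequality''---there, keeping \(p\left(1-q\Phi(p+\lambda)\right)\) below \(\Pi\) (equivalently, as the paper argues, keeping it increasing up to its value \(\Pi\) at \(\ubar{p}\)) requires \(3q^3-5q^2+4q-1\le 0\), whose root \(\approx 0.406\) is exactly the threshold the paper states; since \(\tfrac{2}{5}<0.406<\sqrt{2}-1\), your conclusion still goes through.
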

\noindent The proof is in \(\S\)\ref{proof:eq_dist_3types}.

\section{Consumer Learning}
\label{sec:learning}

\begin{figure}
    \centering
    \includegraphics[scale=1]{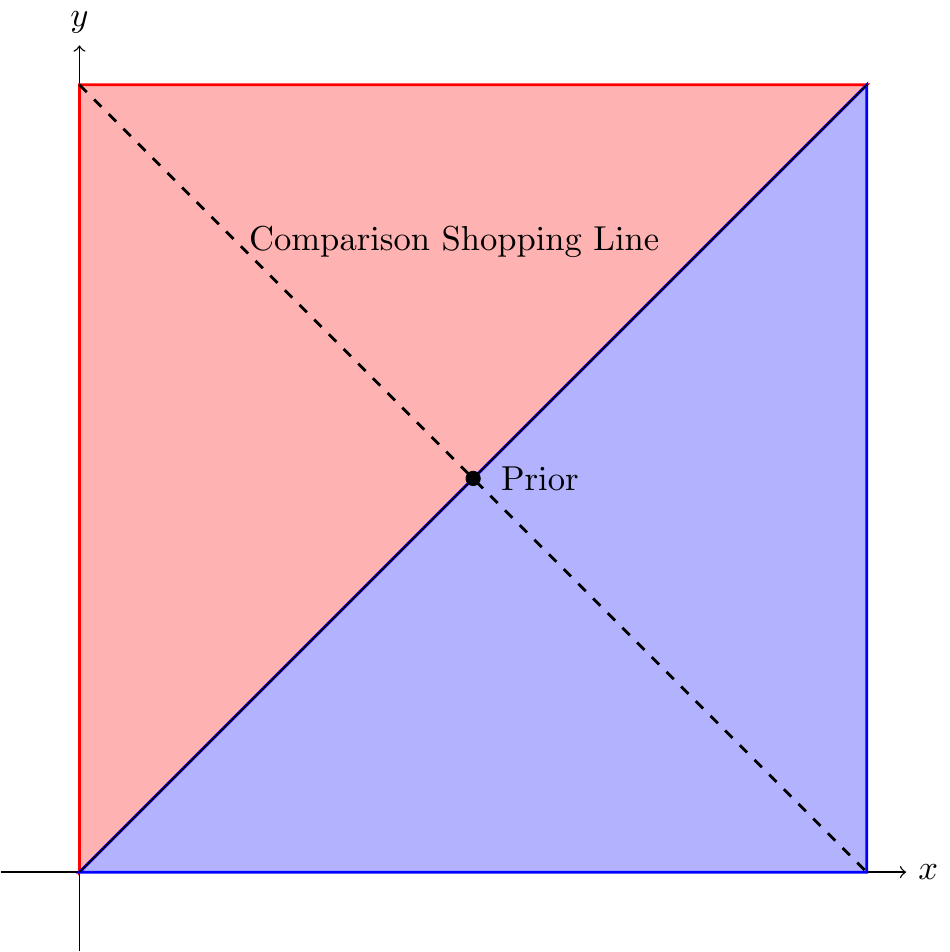}
    \caption{The Space of Valuations}
    \label{fig0}
\end{figure}

In this section, we take as given the price-setting by the firms that we characterized in \(\S\)\ref{sec:pricing} and find the consumer's optimal learning. Combined, the pricing and the learning will give us the equilibria in the grand game. 

First, note that if learning were free (\(\kappa = 0\)), the consumer would perfectly learn her valuation. However, the following theorem proves that as long as there exist frictions, no matter how small (\(\kappa > 0\)), the consumer only learns along the \textbf{Comparison Shopping} line \(y = 1 - x\). That is, the consumer's learning exclusively focuses on the \textit{relative} merits of each firm's product. Defining the set \(\ell^*\) as 
\[\ell^* \coloneqq \left\{\left(x,y\right) \in \left[0,1\right]^2 \ \colon \ y = 1 - x\right\}\text{,}\] and saying that the consumer \textbf{Comparison Shops} if her acquired distribution over posteriors is supported on a subset of \(\ell^*\). Figure \ref{fig0} illustrates the space of valuations and the comparison shopping line. Posteriors in the red 
(blue) region are those at which the consumer's valuation for firm \(2\)'s (\(1\)'s) good is highest. The dotted diagonal line is the comparison shopping line when the prior is the dot, which since the prior is symmetric is on the boundary of the red and blue regions. In equilibrium, the consumer only learns along the dotted, comparison shopping line. Our formal result is:
\begin{theorem}\label{thmcomparisonshop}
    If firms choose symmetric, atomless, distributions that admit densities with support on some closed interval \(\left[\ubar{p}, \bar{p}\right]\), the consumer comparison shops.
\end{theorem}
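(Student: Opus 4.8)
The plan is to collapse the consumer's two-dimensional information-design problem onto the single ``relative value'' coordinate \(\tfrac{x-y}{2}\), and then to show that any learning strategy is weakly (and, off \(\ell^*\), strictly) improved by a deterministic affine projection onto \(\ell^*\) that keeps the consumer inside the feasible set \(\mathcal F\) of fusions of the prior.

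First I would compute the consumer's reduced-form continuation payoff. Fix the conjectured common price distribution \(\Gamma\) (symmetric, atomless, with a density on \([\ubar p,\bar p]\)). Under full market coverage the consumer at posterior \((x,y)\) buys from whichever firm offers the higher net value, and since \(\Gamma\) is atomless ties are negligible, so \(u(x,y)=\mathbb E_{p_1,p_2\overset{\mathrm{iid}}{\sim}\Gamma}\!\left[\max\{x-p_1,\,y-p_2\}\right]\). Substituting \(x=\tfrac{x+y}{2}+\tfrac{x-y}{2}\) and \(y=\tfrac{x+y}{2}-\tfrac{x-y}{2}\) gives the decomposition \(u(x,y)=\tfrac{x+y}{2}+v\!\left(\tfrac{x-y}{2}\right)\), where \(v(d)\coloneqq\mathbb E[\max\{d-p_1,\,-d-p_2\}]\) is convex (an integral of maxima of affine functions) and even (by exchangeability of \(p_1,p_2\)). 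Because every \(F\in\mathcal F\) is a fusion of the prior it inherits the prior's marginal means, \(\mathbb E_F[x]=\mathbb E_F[y]=\tfrac12\), so \(\int\tfrac{x+y}{2}\,dF=\tfrac12\) is constant on \(\mathcal F\) and the consumer's problem is equivalent to \(\max_{F\in\mathcal F}\int\big[v(\tfrac{x-y}{2})-\kappa\,c(x,y)\big]\,dF\).

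Next I would exploit the separable, symmetric cost. Write \(c(x,y)=g(x)+g(y)\) with \(g(z)=\varphi(z)+\varphi(1-z)\), which is strictly convex and symmetric about \(\tfrac12\). Holding the difference \(s=x-y\) fixed, \(c\) as a function of the sum \(t=x+y\) equals \(\psi_s(t)\coloneqq g(\tfrac{t+s}{2})+g(\tfrac{t-s}{2})\), strictly convex in \(t\), with \(\psi_s'(1)=\tfrac12\big[g'(\tfrac12+\tfrac s2)+g'(\tfrac12-\tfrac s2)\big]=0\) since symmetry of \(g\) gives \(g'(\tfrac12+u)=-g'(\tfrac12-u)\); hence \(t=1\) is the unique minimizer. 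Now let \(\pi:[0,1]^2\to\ell^*\) be the affine projection \(\pi(x,y)=\big(\tfrac{1+x-y}{2},\,\tfrac{1-x+y}{2}\big)\), which fixes \(\ell^*\), preserves \(x-y\), and maps \([0,1]^2\) into \(\ell^*\subseteq[0,1]^2\). By the previous line, \(c(\pi(x,y))\le c(x,y)\) pointwise, with equality iff \(x+y=1\); combined with the fact that \(v\) only sees \(x-y\) (unchanged by \(\pi\)) and that \(\pi_x+\pi_y\equiv1\), we get \(\int\big[v-\kappa c\big]\,d(\pi_{\#}F)\ge\int\big[v-\kappa c\big]\,dF\), strictly unless \(F\) is already supported on \(\ell^*\).

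The main obstacle is the feasibility check: one must show \(\pi_{\#}F\in\mathcal F\), i.e., that it is still a fusion of the prior. I would argue this via two observations (writing \(F\preceq_{cx}\mu\) for the convex order, so that \(\mathcal F=\{F:F\preceq_{cx}\mu\}\)). (i) Affine maps preserve the convex order: \(h\) convex \(\Rightarrow h\circ\pi\) convex, so \(F\preceq_{cx}\mu\Rightarrow\pi_{\#}F\preceq_{cx}\pi_{\#}\mu\). (ii) \(\pi_{\#}\mu\preceq_{cx}\mu\): since the prior has marginal means \(\tfrac12\), its corner masses on the diagonal \(\{(0,0),(1,1)\}\) are equal (each \(\tfrac12-\omega\ge0\)), and \(\pi\) sends both to their midpoint \((\tfrac12,\tfrac12)\) while fixing \((1,0)\) and \((0,1)\), which is exactly a mean-preserving contraction of \(\mu\) (the full-support assumption \(\omega\le\tfrac25<\tfrac12\) is more than enough). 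Transitivity of \(\preceq_{cx}\) gives \(\pi_{\#}F\preceq_{cx}\mu\), hence \(\pi_{\#}F\in\mathcal F\). A standard weak-\(*\) compactness and upper-semicontinuity argument gives existence of an optimal \(F\), and the strict-improvement property then forces any optimum to be supported on \(\ell^*\) — that is, the consumer comparison shops. (One could alternatively route the argument through the persuasion-duality results the paper cites, by exhibiting a convex majorant of \(v(\tfrac{x-y}{2})-\kappa c\) that agrees with it on \(\ell^*\) and is affine in the \(x+y\) direction, but the direct projection argument is cleaner and also pins down the support.)
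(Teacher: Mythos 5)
Your proof is correct, and it reaches the conclusion by a genuinely different route from the paper's. The paper computes the directional second derivative of \(V\) along \((1,1)\), observing that the non-cost part of \(V\) depends on the posterior only through \(y\) and \(y-x\) so that strict concavity along \((1,1)\) comes entirely from the strict convexity of \(c\); it then notes that \(c\) restricted to any line of direction \((1,1)\) is minimized on \(\ell^*\) and closes the argument by duality, arguing that a price function majorizing \(V\) on \(\ell^*\) extends to one majorizing \(V\) on the whole square. You make the same geometric fact transparent through the decomposition \(u=\tfrac{x+y}{2}+v\bigl(\tfrac{x-y}{2}\bigr)\) --- the mean term integrates to \(\tfrac12\) against every fusion, so only \(v\) and the cost matter --- and then run a purely primal argument: push any \(F\) forward under the affine contraction \(\pi\) onto \(\ell^*\), verify the pointwise improvement, and, crucially, check the feasibility of \(\pi_{\#}F\), which is the step the duality route never has to confront. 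Your feasibility argument is right: affine maps preserve the convex order, and \(\pi_{\#}\mu\preceq_{cx}\mu\) because the prior places equal mass \(\tfrac12-\omega\) on \((0,0)\) and \((1,1)\), which \(\pi\) sends to their midpoint while fixing \((1,0)\) and \((0,1)\). What your approach buys is a self-contained, elementary argument that immediately yields that \emph{every} best response (not merely some optimizer) is supported on \(\ell^*\); what the paper's approach buys is that the same price-function machinery is reused in the proofs of Theorems \ref{firstmaintheorem} and \ref{secondmaintheorem}, so setting up the dual here pays off later. Two minor notes: the identification of \(\mathcal F\) with \(\{F\colon F\preceq_{cx}\mu\}\) is Strassen's theorem and deserves an explicit citation, and the existence step at the end is not actually needed for the statement --- the strict-improvement property alone rules out any best response not supported on \(\ell^*\).
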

The proof is in \(\S\)\ref{proof:thmcomparisonshop}. The crucial observation behind this theorem is that the consumer's payoff as a function of her posterior is strictly concave along the vector \(\left(1, 1\right)\) and maximized along this vector by points on the comparison-shopping line. Bayes' plausibility then pins down the comparison shopping line \(1 - x\). 

\subsection{Solving the Information Acquisition Problem}

In solving the consumer's problem, we conjecture its solution and use the corresponding strategies by the firms in the pricing-only game to generate the consumer's value function. Then, we verify that the consumer's optimal learning is precisely the two- or three-point support that we conjectured in \(\S\)\ref{sec:pricing}. 

The value function for the consumer from acquiring belief \(\left(x,y\right)\) is 
\[\begin{split}
    V\left(x,y\right) \coloneqq &\mathbb{P}\left(x-p_1 \geq y-p_2\right) \mathbb{E}\left(x-\left.p_1\right|x-p_1 \geq y-p_2\right)\\
    &+ \mathbb{P}\left(y-p_2 \geq x-p_1\right) \mathbb{E}\left(y-\left.p_2\right|y-p_2 \geq x-p_1\right) - \kappa c\left(x,y\right) \text{,}
\end{split}\]
which is continuously differentiable except on \(\partial \left[0,1\right]^2\) (because firms randomize continuously over prices) and is bounded above on the entire square. Accordingly, by Theorem \(1\) of \cite*{persuasionduality}, we have weak duality and the price function solution lies weakly above the consumer's value in her information acquisition problem. From there, it is easy to solve the dual problem and verify that this corresponds to a solution to the primal problem. 

Intuitively, we can make use of the symmetry of the information acquisition problem and ``split'' the prior probabilities of \(\left(1,1\right)\) and \(\left(0,0\right)\) equally between the triangles
\[\label{delta1}\tag{\Jupiter}\Delta^1 \coloneqq \left\{\left(x,y\right) \in \left[0,1\right]^2 \colon 0 \leq x \leq 1 \ \text{\&} \ 0 \leq y \leq x\right\}\text{,}\]
and
\[\label{delta2}\tag{\Saturn}\Delta^2 \coloneqq \left\{\left(x,y\right) \in \left[0,1\right]^2 \colon 0 \leq x \leq 1 \ \text{\&} \ 1 \geq y \geq x\right\}\text{;}\]
before solving two standard \(3\)-state persuasion problems (as any simplex is homeomorphic to the standard simplex), on each of the two triangles. These problems satisfy the assumptions of \cite*{yoder2021}, whose Proposition 2 proves that the concavification approach is valid. It remains to verify that the maximum of the two concavifying planes is convex and lie every above the value function, and that the two planes either are the same or have \(y = x\) as their intersection.

The solution is then as follows. If information is expensive, the price function is just a single plane; i.e., the two concavifying planes are the same plane. If information is moderately expensive, the price function is the maximum of two planes that intersect at \(y = x\) and lie weakly above the value function on that line. Finally, if information is cheap, the price function is as in the moderate cost case, with the additional specification that it is equal to the value function at its minimum, along the line \(y = x\).

\subsection{Expensive Information}

\begin{figure}
\centering
\begin{subfigure}{.5\textwidth}
  \centering
  \includegraphics[scale=.18]{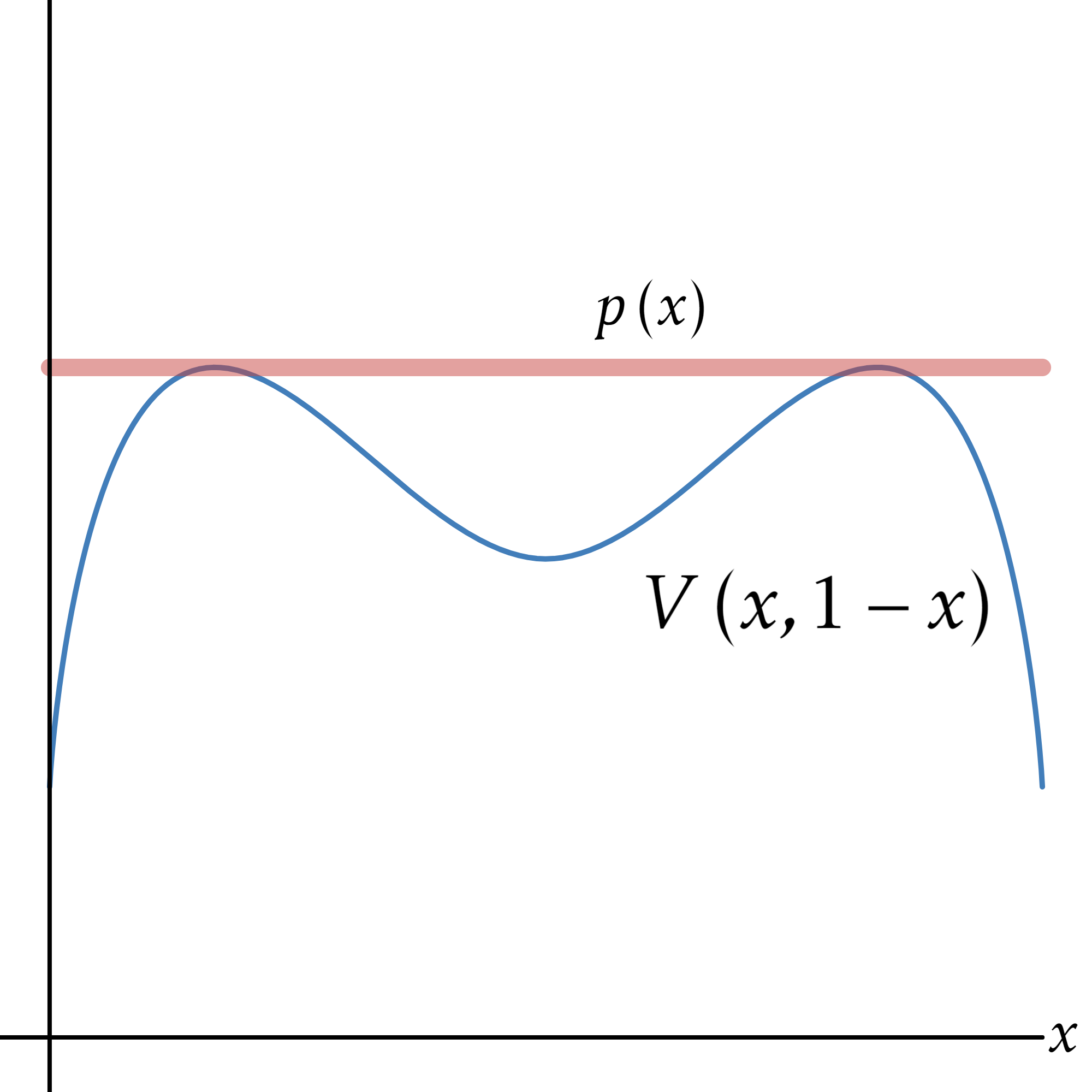}
  \caption{Expensive Information: \(\kappa \geq \bar{\kappa}\)}
  \label{figsub12}
\end{subfigure}%
\begin{subfigure}{.5\textwidth}
  \centering
  \includegraphics[scale=.18]{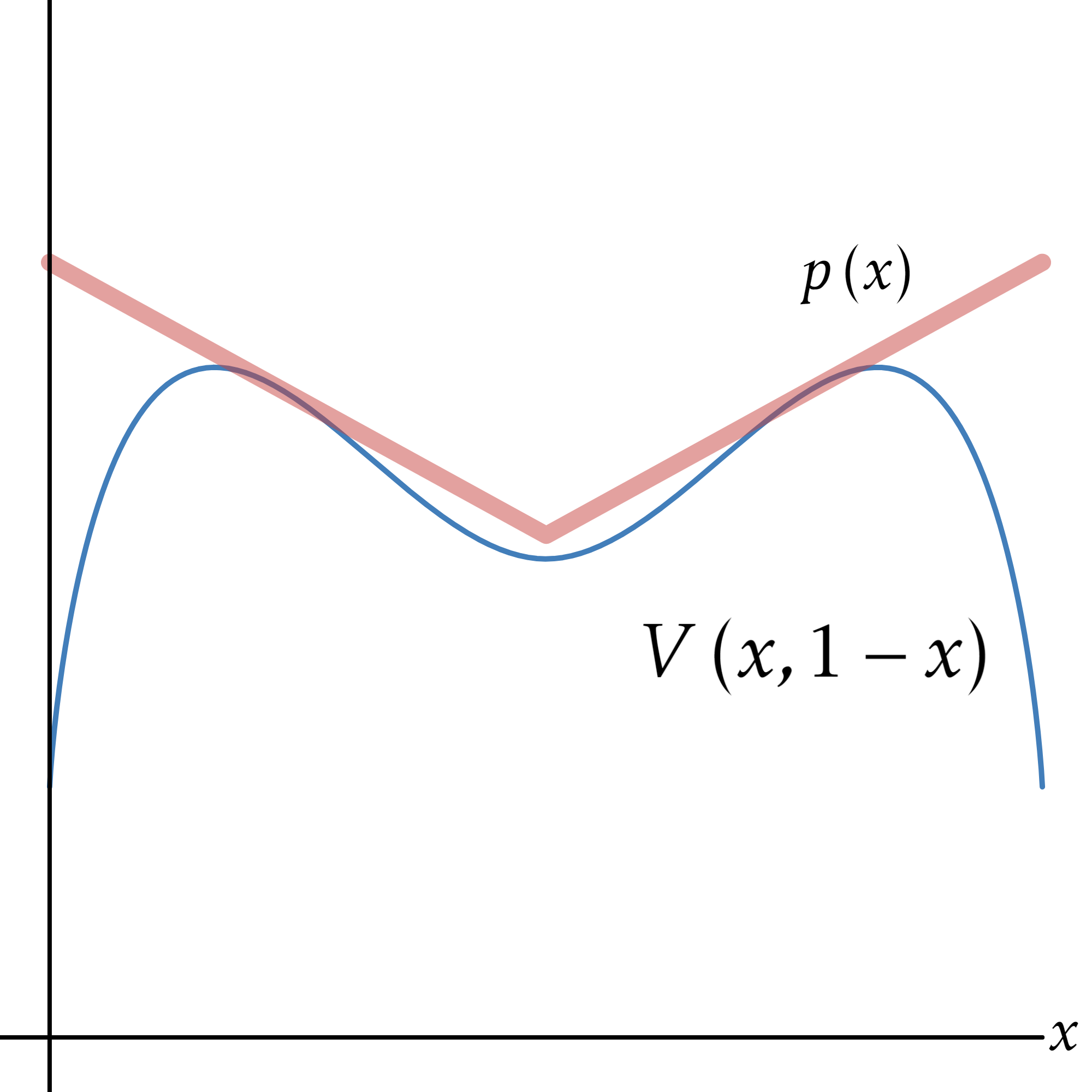}
  \caption{Moderately Costly Information: \(\kappa \in \left[\ubar{\kappa},\bar{\kappa}\right]\)}
  \label{figsub22}
\end{subfigure}
\caption{Two possible price functions when information is not cheap.}
\label{fig:notcheapinfo}
\end{figure}

The easiest scenario to analyze is that in which either information is expensive (\(\kappa\) is large). Our main result of this section is that if frictions are sufficiently large, then regardless of the prior, there is an equilibrium in which the consumer acquires a binary distribution over posteriors.

We say that a consumer \textbf{Comparison Shops With Uniform Two-point Support} if the consumer's acquired distribution over valuations is supported on \[\left\{\left(\frac{1-\lambda}{2}, \frac{1+\lambda}{2}\right), \left(\frac{1+\lambda}{2}, \frac{1-\lambda}{2}\right)\right\}\text{,}\] each with probability \(\frac{1}{2}\).

\begin{theorem}\label{firstmaintheorem}
    If \(\kappa\) is sufficiently high, there is an equilibrium in which the consumer comparison shops with uniform two-point support and firms randomize over prices according to Expression \ref{2firmeqpricing}.
\end{theorem}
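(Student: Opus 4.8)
The plan is to verify the conjectured equilibrium by a guess-and-verify argument that combines Proposition \ref{eqdist} (the pricing side) with Theorem \ref{thmcomparisonshop} and the information-design machinery sketched above (the learning side). First I would fix the candidate: the consumer acquires the two-point distribution on \(\left\{\left(\frac{1-\lambda}{2}, \frac{1+\lambda}{2}\right), \left(\frac{1+\lambda}{2}, \frac{1-\lambda}{2}\right)\right\}\) for some \(\lambda\), and each firm plays \(\Gamma\) from Expression \ref{2firmeqpricing} (with this \(\lambda\)). Given that distribution over valuations, Proposition \ref{eqdist} already tells us the firms are best-responding, so the firm side is immediate once \(\lambda\) is determined. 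The work is entirely on the consumer's side: I must show that, taking \(\Gamma\) as given, the two-point distribution (and no other fusion of the prior) solves the consumer's information-acquisition problem \(\max_{F \in \mathcal{F}} \int (V - \kappa c)\, dF\), and simultaneously pin down \(\lambda\) by Bayes-plausibility.

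The key steps, in order: (1) Using the pricing distribution \(\Gamma\), compute the consumer's reduced-form value function \(V(x,y)\) in closed form (the expression displayed before \ref{delta1}); by Theorem \ref{thmcomparisonshop} I only really need its behavior on and near the comparison-shopping line \(\ell^*\), and along the \((1,1)\)-direction. (2) Invoke weak duality from Theorem 1 of \cite*{persuasionduality}: exhibit a ``price function,'' i.e.\ an affine function \(p(x,y)\) (a single plane, in the expensive-information regime) that lies weakly above \(V(x,y) - \kappa c(x,y)\) on all of \([0,1]^2\) and is tangent to it at the two candidate posteriors. Because the problem decomposes symmetrically over the triangles \(\Delta^1, \Delta^2\) of \ref{delta1}–\ref{delta2}, I would construct this plane using the concavification result of \cite*{yoder2021} (Proposition 2) on each triangle and check the two planes coincide when \(\kappa\) is large. (3) Verify the domination inequality \(p \geq V - \kappa c\) everywhere: this is where the hypothesis ``\(\kappa\) sufficiently high'' does the work — large \(\kappa\) makes \(\kappa c\) steep enough near \(\partial[0,1]^2\) (recall \(|\varphi'| \to \infty\) at the endpoints) and concave-dominant in the \((1,1)\)-direction, so a single tangent plane clears the surface. (4) Conclude by strong duality / complementary slackness that any optimal \(F\) is supported where \(p = V - \kappa c\), i.e.\ on the two tangency points, and then Bayes-plausibility (the barycenter of \(F\) is the prior \((\tfrac12,\tfrac12)\)) forces the \(\tfrac12\)–\(\tfrac12\) weights and determines \(\lambda\) as the unique value consistent with the first-order/tangency conditions; finally check \(\lambda > 0\) and that the resulting \(\lambda\) is small enough for Proposition \ref{eqdist} to apply.

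The main obstacle I anticipate is step (3): producing an explicit affine majorant of \(V - \kappa c\) on the entire square and proving the inequality globally, rather than just locally at the tangency points. \(V\) is only \(C^1\) off the boundary and is built from the piecewise \(\Gamma\), so the majorization has to be checked piece-by-piece, and the quantifier ``\(\kappa\) sufficiently high'' must be made precise — presumably by bounding \(V\) and its relevant second-difference against the strict convexity modulus of \(c\) (using \(c_{xxx} \le 0\) to control curvature). A secondary subtlety is confirming that in this regime the two concavifying planes on \(\Delta^1\) and \(\Delta^2\) are genuinely the same plane (so the price function is globally affine, not a max of two planes); this is what distinguishes the ``expensive'' case of Figure \ref{figsub12} from the moderate case of Figure \ref{figsub22}, and it should follow from the symmetry of \(V\) together with the large-\(\kappa\) bound.
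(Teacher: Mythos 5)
Your high-level architecture (firm side from Proposition \ref{eqdist}, reduction to the line via Theorem \ref{thmcomparisonshop}, an affine price function tangent at the two candidate posteriors, Bayes plausibility pinning \(\lambda\)) matches the paper's strategy, but the step you yourself flag as the main obstacle --- the global domination inequality --- is exactly where your proposed mechanism would fail, and it is the heart of the proof. You argue that ``large \(\kappa\) makes \(\kappa c\) concave-dominant, so a single tangent plane clears the surface.'' But \(\lambda\) is not a free parameter: it is pinned down by the tangency/first-order condition (the paper's Equation \ref{sun}), which forces \(\kappa\) and \(\lambda\) to move together, roughly \(\kappa \sim 1/\lambda^*\) as \(\kappa \to \infty\). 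Since the convexity injected into \(V\) by the price competition terms comes from convolutions of densities supported on intervals of length \(O(\lambda)\), those second-derivative terms also blow up like \(1/\lambda\), at the same rate as \(\kappa c_{xx}\). So there is no regime in which the cost term simply overwhelms the value function's non-concavity; the comparison is a genuine horse race at every \(\kappa\). The paper resolves it not by a magnitude bound but by a shape argument: along the comparison-shopping line it computes the directional derivative \(D(x)\), shows \(D'(x)\) is a sum of a nondecreasing term (using the technical assumption \(c_{xxx}\le 0\)) and a nonnegative increasing term, hence \(D\) has the single-crossing structure that, together with \(D=0\) at the candidate posterior and at \(\tfrac12\), yields \(D\le 0\) on the relevant interval. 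Your plan does not contain this idea, and without it (or a substitute) the verification does not go through.

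A second, related misattribution: in the paper the hypothesis ``\(\kappa\) sufficiently high'' is not what makes the majorization work --- the single-crossing argument is valid for all \(\kappa\). Its role is feasibility: the solution \(\lambda^*(\kappa)\) of \ref{sun} is strictly decreasing in \(\kappa\) with \(\lambda^*\uparrow 1\) as \(\kappa\downarrow 0\), and the two-point distribution is a fusion of the prior only if \(\lambda^*/2\le\omega\). This defines the threshold \(\bar\kappa\), and for \(\kappa\in[\ubar{\kappa},\bar\kappa]\) the constraint binds (\(\lambda=2\omega\), the two-plane case of Figure \ref{figsub22}). Your closing check ``that \(\lambda\) is small enough for Proposition \ref{eqdist} to apply'' points at the wrong constraint --- Proposition \ref{eqdist} holds for any \(\lambda>0\); the binding requirement is Bayes plausibility of the posteriors relative to the prior correlation \(\omega\).
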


The proof is in \(\S\)\ref{proof:firstmaintheorem}. Notably, when \(\kappa\) is sufficiently large, \(\lambda\) is strictly increasing in \(\kappa\): as frictions shrink, the consumer learns more and more in a mean-preserving spread sense. For all such \(\kappa\), the price function is a single plane with zero slope. Eventually (as \(\kappa\) continues to shrink), unless at most one product has high value, \(\kappa\) hits a threshold \(\bar{\kappa}\). Then, for all \(\kappa\) within some interval \(\left[\ubar{\kappa},\bar{\kappa}\right]\) the consumer's learning is the same--as are the pricing strategies by the firms. Here, the price function is the maximum of two planes whose intersection is the line \(y=x\). Both cases are depicted in Figure \ref{fig:notcheapinfo}, where we have substituted in \(y = 1 - x\) (thanks to Theorem \ref{thmcomparisonshop}).

With the equilibrium in hand, we can ask, how does consumer welfare change with the size of the information frictions? For intermediate costs, the consumer's learning is unaffected by the information cost; she learns the same for all \(\kappa\), and therefore the firms' pricing is the same. However, information is becoming more expensive, so the consumer's welfare is strictly decreasing in \(\kappa\) on this interval. 

On the flip side, when \(\kappa \geq \bar{\kappa}\) the opposite relationship exists. When \(\kappa\) increases, the consumer learns less. By learning less, the consumer induces more intense competition, which drives down the distribution of prices. The price effect dominates, so consumer welfare is increasing in the size of the information friction. 
\begin{proposition}\label{firstcomparativestat}
    For intermediate information costs, (\(\kappa 
 \in \left[\ubar{\kappa},\bar{\kappa}\right]\)), the consumer's welfare is strictly decreasing in the size of the friction. For large information costs, (\(\kappa 
  \geq \bar{\kappa}\)), the consumer's welfare is strictly increasing in the size of the friction.
\end{proposition}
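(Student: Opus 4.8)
The plan is to split into the two cost regimes and handle each with an envelope-type argument, since in both regimes the equilibrium objects are available in closed form from Theorem \ref{firstmaintheorem} and Proposition \ref{eqdist} (with $y = 1-x$ substituted in). First I would write consumer welfare as $W(\kappa) = \mathbb{E}[x - p] - \kappa\, \mathbb{E}[c(x,1-x)]$, where the outer expectation is over the consumer's (symmetric, two-point) belief distribution and the price $p$ is drawn from $\Gamma$ in Expression \ref{2firmeqpricing}; because of symmetry and full market coverage the expected value term is just $\tfrac12$ plus (half of) the expected valuation gap $\lambda$, minus the expected price paid, which is the expected \emph{minimum} of the two firms' prices since the consumer buys from the cheaper-adjusted firm. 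All of these are explicit functions of $\lambda$, and $\lambda$ is in turn a function of $\kappa$: constant on $[\ubar\kappa,\bar\kappa]$ and strictly increasing (and differentiable) for $\kappa \geq \bar\kappa$ by the remarks following Theorem \ref{firstmaintheorem}.

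For the intermediate-cost case $\kappa \in [\ubar\kappa,\bar\kappa]$ the argument is immediate: $\lambda$ is pinned at a constant $\lambda^*$, so the belief distribution and hence the pricing distribution $\Gamma$ do not move with $\kappa$; the expected-surplus term $\mathbb{E}[x-p]$ is therefore a constant, while the information-cost term $-\kappa\,\mathbb{E}[c(x,1-x)]$ is strictly decreasing in $\kappa$ because $\mathbb{E}[c(x,1-x)] > 0$ (the belief is non-degenerate and $c$ is strictly positive off the prior). Hence $W'(\kappa) < 0$ on this interval.

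For the large-cost case $\kappa \geq \bar\kappa$ I would use the consumer's own first-order optimality to kill the indirect effect of $\kappa$ through $\lambda$. Since at each $\kappa$ the chosen $\lambda(\kappa)$ maximizes the consumer's objective $G(\lambda) - \kappa H(\lambda)$ where $G(\lambda) := \mathbb{E}_\lambda[x-p]$ (the gross payoff, pricing strategies reacting to $\lambda$ as in equilibrium) and $H(\lambda) := \mathbb{E}_\lambda[c(x,1-x)]$, the envelope theorem gives $W'(\kappa) = -H(\lambda(\kappa)) + (\text{indirect terms that vanish at the optimum}) = -H(\lambda(\kappa))$ \emph{only if} the relevant objective is exactly what the consumer maximizes — but here the subtlety is that $G$ already incorporates the firms' equilibrium response, so I would instead differentiate $W(\kappa) = G(\lambda(\kappa)) - \kappa H(\lambda(\kappa))$ directly: $W'(\kappa) = \big(G'(\lambda) - \kappa H'(\lambda)\big)\lambda'(\kappa) - H(\lambda)$. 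The consumer's interior optimality condition is $G_{\text{direct}}'(\lambda) = \kappa H'(\lambda)$ where $G_{\text{direct}}'$ holds the pricing distribution fixed; the gap between $G'$ and $G_{\text{direct}}'$ is the strategic (price-softening) effect, which is negative. Combined with $\lambda'(\kappa) > 0$, this makes the bracketed term strictly negative, so $W'(\kappa) = (\text{negative})\cdot(\text{positive}) - H(\lambda) < 0$ — which is the \emph{wrong} sign, so the correct reading must be that as $\kappa$ rises $\lambda$ \emph{falls}; I would recheck the monotonicity claim and, using $\lambda'(\kappa) < 0$ together with the strategic term, show that the price-effect channel $G'(\lambda)\lambda'(\kappa)$ (prices fall, consumer pays less) dominates both the worse-match channel and the mechanical cost channel, giving $W'(\kappa) > 0$.

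The main obstacle is exactly this sign bookkeeping in the large-cost regime: isolating the three competing channels (direct match-quality loss, information-cost saving, and the strategic price-intensification benefit) and proving the strategic channel dominates. Concretely this reduces to computing $\tfrac{d}{d\lambda}\mathbb{E}_\lambda[\min\{p_1 + \text{shift}, p_2\}]$ under $\Gamma$ and comparing its magnitude, scaled by $|\lambda'(\kappa)|$, against $\tfrac{d}{d\lambda}(\tfrac{\lambda}{2}) \cdot \lambda'(\kappa) = \tfrac12 \lambda'(\kappa)$ and against $H(\lambda) + \kappa H'(\lambda)\lambda'(\kappa)$; all three are explicit in $\lambda$ once $\Gamma$ and the $\kappa$–$\lambda$ relation from the equations pinning down the equilibrium in Theorem \ref{firstmaintheorem} are substituted, so the hard part is a careful but finite algebraic inequality, which I would verify holds for all $\lambda$ in the relevant range $(0, \lambda^*]$ (equivalently all $\kappa \geq \bar\kappa$), with the boundary case $\kappa = \bar\kappa$ matching the intermediate regime by continuity.
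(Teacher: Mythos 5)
Your treatment of the intermediate regime is correct and is exactly the paper's argument: \(\lambda\) is pinned at \(2\omega\), so pricing and learning are fixed and only the \(-\kappa\,\mathbb{E}[c]\) term moves, with \(\mathbb{E}[c]>0\) because the posteriors are bounded away from the prior. Your decomposition for \(\kappa\ge\bar\kappa\), namely \(W'(\kappa)=\bigl(G'(\lambda)-\kappa H'(\lambda)\bigr)\lambda'(\kappa)-H(\lambda)\), is also precisely the paper's decomposition (their bracket is \(S'(\lambda)-\tfrac{\kappa}{2}\bigl(c_y-c_x\bigr)\)). The sign wobble on \(\lambda'(\kappa)\) is forgivable — the paper's own text asserts \(\lambda\) is increasing in \(\kappa\), while its proof of Theorem \ref{firstmaintheorem} establishes \(\lambda'(\kappa)<0\) via the IFT, and you correctly land on the latter.

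The genuine gap is that you stop exactly where the proof has to start. The paper's key step is to substitute the consumer's first-order condition (Equation \ref{sun}), \(\kappa\bigl(c_y-c_x\bigr)=1-2P_1(\lambda)\), into the bracket; because every price in \(\Gamma\) is proportional to \(\lambda\), the remaining terms \(S'(\lambda)-\tfrac12+P_1(\lambda)\) are parameter-free and evaluate to the explicit constant \(-1-\sqrt{2}<0\). Your proposal instead leaves this as ``a careful but finite algebraic inequality, which I would verify'' — that is, the decisive computation is asserted rather than performed, and without it you cannot sign \(W'(\kappa)\). You are right that a residual issue remains even after that computation: one still must argue that \((1+\sqrt2)\,|\lambda'(\kappa)|\) dominates the direct cost term \(H(\lambda)=c\bigl(\tfrac{1-\lambda}{2},\tfrac{1+\lambda}{2}\bigr)\) (the paper's own write-up is terse on this point, concluding immediately from \(\partial V/\partial\lambda<0\) and \(\lambda'(\kappa)<0\)). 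So your channel accounting is, if anything, more explicit than the paper's, but as a proof the proposal is incomplete: the collapse of the bracket to \(-1-\sqrt2\) via \ref{sun} is the content of the result, and it is missing.
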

\noindent The proof is in \(\S\)\ref{proof:firstcomparativestat}.

\subsection{Cheap Information}

\begin{figure}
    \centering
    \includegraphics[scale=.25]{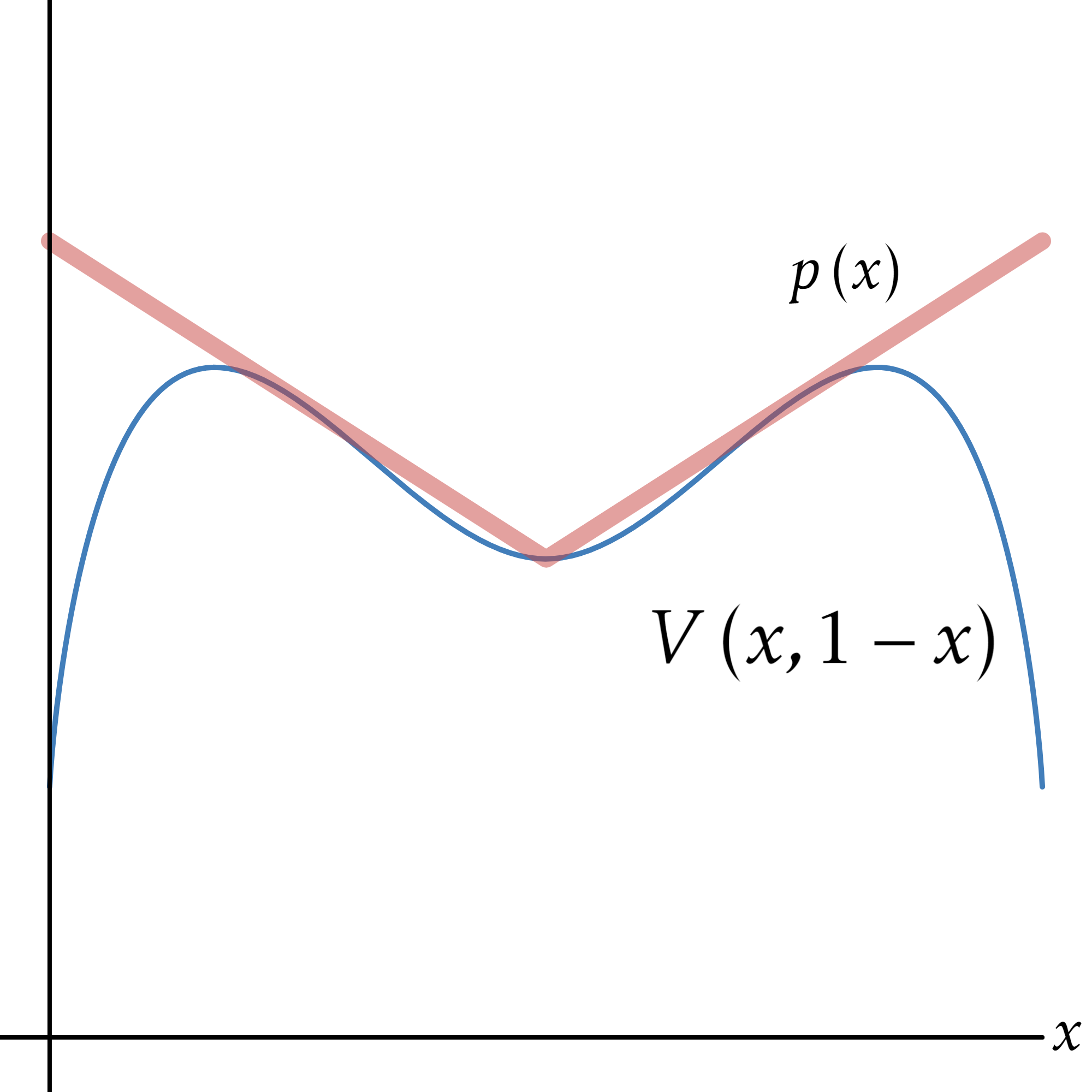}
    \caption{Cheap Information Equilibrium}
    \label{fig:cheapinfo}
\end{figure}

We say that a consumer \textbf{Comparison Shops With Occasional Indifference} if her acquired distribution over valuations has support on \(3\) points \(\left(\frac{1-\lambda}{2}, \frac{1+\lambda}{2}\right)\), \(\left(\frac{1}{2}, \frac{1}{2}\right)\), and \(\left(\frac{1+\lambda}{2}, \frac{1-\lambda}{2}\right)\).

\begin{theorem}\label{secondmaintheorem}
    If \(\kappa\) is sufficiently low there is an equilibrium in which the consumer comparison shops with occasional indifference and firms randomize over prices according to Expression \ref{priceset2}.
\end{theorem}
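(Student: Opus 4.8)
\emph{Proof plan.} The plan is a guess-and-verify of the kind outlined in \S\ref{sec:learning}: conjecture the consumer's distribution, read the firms' behavior off \S\ref{sec:pricing}, and certify the conjecture with a dual ``price function.''

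\emph{The conjecture and the pricing it induces.} I would conjecture that the consumer acquires the three-point distribution $F^*_\lambda$ supported on $A\coloneqq(\tfrac{1-\lambda}{2},\tfrac{1+\lambda}{2})$, $B\coloneqq(\tfrac{1+\lambda}{2},\tfrac{1-\lambda}{2})$, $C\coloneqq(\tfrac12,\tfrac12)$, with weights forced by Bayes plausibility. To find the weights, decompose the prior into its restrictions to the triangles \ref{delta1} and \ref{delta2}, splitting the atoms at $(1,1)$ and $(0,0)$ equally; inside $\Delta^1$ the available vertices are $(1,0),(1,1),(0,0)$, the only mean-preserving way to build mass at $C$ is an even blend of $(1,1)$ and $(0,0)$, and $B=\lambda\,(1,0)+\tfrac{1-\lambda}{2}(1,1)+\tfrac{1-\lambda}{2}(0,0)$, so all of the $(1,0)$-mass must flow to $B$. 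This forces weight $q=\omega/\lambda$ on each of $A,B$ and $1-2q$ on $C$, which is exactly the normalized three-point environment of \S\ref{sec:pricing} (and $q=\omega/\lambda\le\tfrac25$ once $\lambda$ is near $1$). By Proposition \ref{eq_dist_3types} the firms then respond by drawing prices i.i.d.\ from $\Phi=\Phi_\lambda$ of \ref{priceset2}, so it remains to pick $\lambda=\lambda(\kappa)$ making $F^*_\lambda$ a best response to $\Phi_\lambda$.

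\emph{Reduction to a price function.} Against $\Phi_\lambda$ the consumer's value is $V=W-\kappa c$ with $W(x,y)=\mathbb{E}[\max(x-p_1,y-p_2)]$, $p_i$ i.i.d.\ $\Phi_\lambda$. Since $\Phi_\lambda$ is atomless with a density, $W$ is $C^1$ (indeed $C^2$) on the interior of the square, symmetric and convex; on the cone $\{x-y\ge\lambda\}$ the high firm always wins (as $\bar p-\ubar p=\lambda$), so there $W(x,y)=x-\mathbb{E}_{\Phi_\lambda}[p]$ and $\nabla W=(1,0)$ along $x-y=\lambda$, while $\nabla W(C)=(\tfrac12,\tfrac12)$ and $\tfrac{d}{dx}W(x,x)\equiv1$. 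Following the recipe of \S\ref{sec:learning} — weak duality (Theorem 1 of \cite*{persuasionduality}) plus concavification on $\Delta^1,\Delta^2$ (Proposition 2 of \cite*{yoder2021}) — it suffices to exhibit a $P$ on $[0,1]^2$ that is (i) convex, (ii) affine on each of $\Delta^1,\Delta^2$, (iii) weakly above $V$, and (iv) equal to $V$ at $A,B,C$. Given such a $P$: (iv) gives $\int P\,dF^*_\lambda=\int V\,dF^*_\lambda$; the fusion from the prior to $F^*_\lambda$ acts within each triangle (on which $P$ is affine) and the two affine pieces agree on the common edge $\{y=x\}\ni C,(1,1),(0,0)$, so $\int P\,d(\text{prior})=\int P\,dF^*_\lambda$; combined with weak duality this certifies $F^*_\lambda$ optimal, and with Proposition \ref{eq_dist_3types} yields the equilibrium.

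\emph{Building $P$ and choosing $\lambda$.} By symmetry I would take $P=\max\{P_1,P_2\}$ with $P_1$ affine and $P_2(x,y)=P_1(y,x)$, so the planes coincide on $\{y=x\}$. Pin $P_1$ down by tangency to $V$ at the interior support point $B$: $P_1(B)=V(B)$ and $\nabla P_1=\nabla V(B)=(1-\kappa c_x(B),\,\kappa c_x(B))$ (using $\nabla W(B)=(1,0)$ and $c_y(B)=-c_x(B)$); then $P_1$ has directional slope $1$ along $(1,1)$, matching $\tfrac{d}{dx}V(x,x)$ at $C$ (there $c_x=c_y=0$). The one remaining requirement is $P_1(C)=V(C)$; granting it, since the $(1,1)$-slopes also match and $x\mapsto V(x,x)$ is strictly concave, $P_1$ restricted to $\{y=x\}$ is the tangent line of that concave curve at $C$, hence $\ge V$ on the whole crease with equality only at $C$ — this is the ``extra specification'' at the pooling point flagged in \S\ref{sec:learning}. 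Substituting $W(B)=\tfrac{1+\lambda}{2}-\mathbb{E}_{\Phi_\lambda}[p]$, $W(C)=\tfrac12-\mathbb{E}[\min(p_1,p_2)]$ and $c(C)=0$, the equation $P_1(C)=V(C)$ collapses to
\[\tfrac12\,\mathbb{E}|p_1-p_2|\;=\;\kappa\big(\lambda\,c_x(B)-c(B)\big)\text{,}\]
a scalar equation in $\lambda$ alone (recall $q=\omega/\lambda$). I would solve it by an intermediate-value argument: the left side depends on $\lambda$ only and stays bounded and positive, while the right side vanishes as $\kappa\downarrow0$ for fixed $\lambda<1$ and, for fixed small $\kappa$, is increasing in $\lambda$ and blows up as $\lambda\uparrow1$ because $c_x(B)=\varphi'(\tfrac{1+\lambda}{2})-\varphi'(\tfrac{1-\lambda}{2})\to\infty$. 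Hence for all small $\kappa$ there is a unique root $\lambda(\kappa)\in(\tfrac52\omega,1)$, with $\lambda(\kappa)\uparrow1$ and $q(\kappa)\downarrow\omega$ as $\kappa\downarrow0$ — the distribution that separates the states $(1,0)$ and $(0,1)$ and pools only the genuinely indifferent ones, which is what makes the free-information limit \textit{ex post} efficient.

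\emph{The remaining inequalities (the main obstacle).} Two things must be checked. First, convexity of $P$: writing $\nabla P_1=(g_1,g_2)$, one has $g_1-g_2=1-2\kappa c_x(B)$, so $P$ is convex iff $2\kappa c_x(B)\le1$; along the root the displayed equation pins $\kappa c_x(B)$ near $\tfrac12\mathbb{E}|p_1-p_2|/\lambda<\tfrac12$, so this holds for $\kappa$ small. Second, and this is the real work, $P_1\ge V$ must hold on all of $\Delta^1$, not merely through $A,B,C$. Near $\partial[0,1]^2$ this is free since $V\to-\infty$; on the cone $\{x-y\ge\lambda\}\cap\Delta^1$ it is free since there $V$ is strictly concave with $P_1$ its tangent at $B$. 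The delicate region is the strip $\{0\le x-y<\lambda\}\cap\Delta^1$, where $V=W-\kappa c$ is a difference of convex functions. I would parametrize it by the lines $x-y=s$, $s\in[0,\lambda]$: on each such line $W$ is affine in $x$ (since $\max(x-p_1,x-s-p_2)=x-\min(p_1,s+p_2)$), so $V$ is strictly concave in $x$ and $P_1$ affine in $x$, and the line-wise minimum $h(s)\coloneqq\min_x\{P_1-V\}$ vanishes at $s=0$ (the point $C$) and $s=\lambda$ (the point $B$); the content is that $h$ is concave in $s$, whence $h\ge0$ on $[0,\lambda]$, and this is where I expect the standing hypothesis $c_{xxx}\le0$ to enter — controlling how the curvature contributed by $-\kappa c$ varies across the strip. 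Assembling these, for all sufficiently small $\kappa$ the profile $(F^*_{\lambda(\kappa)},\Phi_{\lambda(\kappa)})$ is an equilibrium of the asserted form.
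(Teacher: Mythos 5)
Your proposal is correct and follows essentially the paper's own route: a two-plane price function tangent to \(V\) at the extreme support points and touching it at the prior, whose consistency conditions collapse to a single scalar equation in \(\lambda\) --- your \(\tfrac{1}{2}\mathbb{E}\left|p_1-p_2\right| = \kappa\left(\lambda c_x(B) - c(B)\right)\) is exactly the paper's Equation \ref{pinninglambda} once one substitutes \(q=\omega/\lambda\) and identifies \(-\omega t(\lambda)\) with \(\tfrac{1}{2}\mathbb{E}\left|p_1-p_2\right|\) --- followed by the same intermediate-value analysis yielding a unique root \(\lambda(\kappa)\uparrow 1\) as \(\kappa \downarrow 0\). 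The global-domination check you flag as the remaining obstacle is likewise not carried out explicitly in the paper's proof (which leans on Theorem \ref{thmcomparisonshop} to reduce to the comparison-shopping line and the machinery surrounding Claim \ref{claim51}), so your treatment is no less complete than the published one.
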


\begin{corollary}\label{efficientcorollary}
   As information costs vanish, \(\kappa \downarrow 0\), there is an efficient limiting equilibrium.
\end{corollary}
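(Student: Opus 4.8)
The plan is to start from the family of equilibria produced by Theorem~\ref{secondmaintheorem}, show that it has a well-defined limit as \(\kappa\downarrow0\), identify that limit, and check that the allocation it induces maximizes realized surplus in every state. It is worth recording at the outset the efficiency benchmark: under full market coverage and zero marginal costs, an outcome is \emph{ex post} efficient exactly when, for every realization of \((Z_1,Z_2)\), the consumer purchases from a firm in \(\argmax_i Z_i\); since she always buys, this is the only requirement.

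The first task is to pin down the limit of the equilibrium parameters. For each small \(\kappa\), Theorem~\ref{secondmaintheorem} supplies an equilibrium described by a pair \((\lambda(\kappa),q(\kappa))\): the consumer's posteriors sit at \(\bigl(\tfrac{1-\lambda(\kappa)}{2},\tfrac{1+\lambda(\kappa)}{2}\bigr)\), \(\bigl(\tfrac12,\tfrac12\bigr)\), \(\bigl(\tfrac{1+\lambda(\kappa)}{2},\tfrac{1-\lambda(\kappa)}{2}\bigr)\) with probabilities \(q(\kappa),1-2q(\kappa),q(\kappa)\), and each firm prices by \(\Phi\) from Expression~\ref{priceset2} (with those parameters). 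I would show \(\lambda(\kappa)\to1\) and \(q(\kappa)\to\omega\). For \(\lambda(\kappa)\to1\): the first-order condition characterizing \(\lambda(\kappa)\) in the proof of Theorem~\ref{secondmaintheorem} equates the marginal gross value of sharper relative information to the marginal cost \(\kappa\,\partial_\lambda\!\int c\,dF\); the former stays bounded away from zero as \(\lambda\) moves toward the corners, while the latter tends to \(0\) pointwise in \(\lambda\) as \(\kappa\to0\), so the equilibrium \(\lambda(\kappa)\) is forced toward \(1\). Given \(\lambda(\kappa)\to1\), the constraint that the three-point distribution be a fusion of the prior reads \(q(\kappa)\lambda(\kappa)\le\omega\), and this constraint binds at the cheap-information optimum (equivalently, the consumer perfectly sorts the two ``decisive'' states \((1,0)\) and \((0,1)\) --- this is the content of the condition that the price function equal the value function at its minimum); hence \(q(\kappa)=\omega/\lambda(\kappa)\to\omega\).

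Next I would identify and justify the limit. As \(\kappa\downarrow0\) the consumer's learning converges to the signal that reveals whether \(Z_1=Z_2\) and, when they differ, which firm has the higher value: posteriors \((1,0)\) and \((0,1)\) each with probability \(\omega\), and \(\bigl(\tfrac12,\tfrac12\bigr)\) with probability \(1-2\omega\). This is a fusion of the prior --- it pools the states \((1,1)\) and \((0,0)\) at the prior belief while separating \((1,0)\) and \((0,1)\) --- and the associated coupling is the ``sorting'' one, in which the posterior \((1,0)\) occurs only in the states \((1,0)\), \((1,1)\), \((0,0)\), and likewise \((0,1)\) only in \((0,1)\), \((1,1)\), \((0,0)\). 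The firms' pricing \(\Phi\) converges to the atomless distribution given by Expression~\ref{priceset2} at \((\lambda,q)=(1,\omega)\), which by Proposition~\ref{eq_dist_3types} is a symmetric equilibrium of the three-point pricing game at those parameters; and since at \(\kappa=0\) any signal that leaves her purchase decision unchanged is payoff-equivalent for the consumer, this learning strategy is a best reply to \(\Phi\), so the limit is an equilibrium of the frictionless game. Efficiency is then immediate from the coupling: if \(Z=(1,0)\) the posterior is \((1,0)\), firm~\(1\) looks strictly better, and (because the price support has width exactly \(1\)) the consumer buys from firm~\(1\) almost surely; symmetrically if \(Z=(0,1)\); and if \(Z=(1,1)\) or \(Z=(0,0)\) the posterior is \(\bigl(\tfrac12,\tfrac12\bigr)\), she is indifferent and buys from one of the firms, which is optimal since both match values coincide. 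Hence in every state she purchases an \emph{ex post} optimal product.

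The substantive obstacle is the second step, and within it the claim \(q(\kappa)\to\omega\). This is precisely what separates our result from the monopoly benchmark of \S\ref{sec:rrsbinary}: were \(q(\kappa)\) to converge to some \(q^{*}<\omega\), then with probability \(2(\omega-q^{*})>0\) the true state would have a strict winner while the consumer sat at the indifferent posterior and bought the cheaper (hence possibly wrong) firm, and \emph{ex post} efficiency would fail in the limit. Proving that the fusion constraint \(q\lambda\le\omega\) binds at the cheap-information optimum, and that \(\lambda(\kappa)\to1\), therefore requires working carefully with the concavification conditions established in the proof of Theorem~\ref{secondmaintheorem} --- in particular, controlling the marginal gross value of information uniformly as the informative posteriors approach the corners of the square. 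Once \((\lambda(\kappa),q(\kappa))\to(1,\omega)\) is in hand, identifying the limiting signal and verifying efficiency are routine.
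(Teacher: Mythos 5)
Your proposal is correct and follows essentially the same route as the paper: there the corollary is item (iii) at the end of the proof of Theorem \ref{secondmaintheorem}, where the tangency condition (Equation \ref{pinninglambda}) characterizes \(\lambda^*(\kappa)\) with \(q=\omega/\lambda^*\) built into the construction, and \(\lambda^*\uparrow 1\) as \(\kappa\downarrow 0\) because for any fixed \(\lambda_0<1\) the cost term \(\kappa v(\lambda_0)\) vanishes while the gross-benefit term stays bounded away from zero. Two cosmetic slips, neither of which affects the argument: in the limit the posterior \((1,0)\) is degenerate and so can arise only in state \((1,0)\) (not also in \((1,1)\) or \((0,0)\), as your description of the coupling suggests), and the marginal cost does not vanish uniformly in \(\lambda\) --- indeed \(v(\lambda)\to\infty\) as \(\lambda\uparrow 1\) by the Inada condition on \(\varphi\), which is exactly why efficiency obtains only in the limit rather than at small positive \(\kappa\).
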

The proof for both are in \(\S\)\ref{proof:secondmaintheorem}. The limiting equilibrium has support on three points; \(\left(0,1\right)\), \(\left(1,0\right)\), and \(\left(\frac{1}{2},\frac{1}{2}\right)\).The firms price so that the consumer purchases from the advantaged firm--if there is an advantaged firm after her learning--with certainty. The consumer never purchases from a firm that is worse than the other \textit{ex post}. The consumer's learning in this equilibrium is depicted in Figure \ref{fig:cheapinfo}.

\section{Discussion and Extensions}\label{extensions}
In order to make traction, we needed to make several simplifying assumptions. Let us briefly explain a few of our assumptions. 

\smallskip

\noindent \textbf{Binary Values:} We assume that each firm's product takes just one of two values. When frictions are large, this is inconsequential: we show in \(\S\)\ref{sec:density} that unless \(\kappa\) is too small, there is an analogous equilibrium to the one we construct in our main specification when the consumer's value for the two firms' products is distributed according to some density on \(\left[0,1\right]^2\). It is when information is cheap that the problem changes: the learning with three-point support that we identify is no longer optimal for the consumer. Instead, we conjecture that the consumer now acquires a continuum of posteriors close to the prior plus possible point masses on more extreme posteriors. 

\smallskip

\noindent \textbf{Symmetric Firms:} Like the previous assumption, this is for tractability. The equilibrium of the pricing-only game becomes quite difficult to construct when firms are asymmetric.

\smallskip

\noindent \textbf{Parametric Assumption on the Prior:} We make a rather cryptic stipulation that the positive correlation between the consumer's values for the two firms' goods cannot be too high. This is again due to the challenges in constructing an equilibrium in the pricing game between the firms: it is much easier to construct an equilibrium in the pricing-only game when the probability of the ``tie'' belief (when the consumer's distribution has symmetric three-point support) is sufficiently high. Our parametric assumption, thus, guarantees this is true in the consumer's optimal learning when frictions are sufficiently low.

\smallskip

\noindent \textbf{Private Learning Before Trading:} We assume that the consumer learns before she observes the firms' prices and that; moreover, this learning is private.\footnote{\cite*{matvejka2012simple} study a related scenario in which firms set prices before the consumer learns.} The timing in our main environment (learning before trading) is realistic in many environments: in particular, learning about a service provider's reputation seems especially fitting. Our timing assumption is also that made by \hyperlink{RRS}{RRS}, which allows us to identify the effects of competition cleanly. In \(\S\)\ref{sec:observable}, we allow for public learning and show that a hold-up problem emerges, which leads to zero information acquisition by the consumer. 

\smallskip

\noindent \textbf{Cost Function:} The consumer's cost of acquiring information is a linear functional of the distribution over posteriors. We assume this posterior-separable form for tractability. Moreover, we specify that the convex function that is integrated has unbounded slope at the boundaries of the unit square. This is done to ensure an ``interior'' solution in the consumer's information acquisition problem. Importantly, this specification only makes our convergence result more difficult to attain: if the slope were bounded our results would go through with the modification that the results would no longer be limit results but would hold for sufficiently small positive \(\kappa\).

\smallskip

\noindent \textbf{Full Market Coverage:} Our main specification assumes that the consumer's outside option is negligible, so she always purchases from one of the firms. Clearly, when \(\kappa\) is large, the equilibrium we construct remains an equilibrium even with a (potentially) relevant outside option of, say, \(0\). Naturally, this is not the case as \(\kappa\) vanishes. 

\subsection{Extension to a Prior with a Density}\label{sec:density}

An exact analog of Theorem \ref{firstmaintheorem} holds when the consumer's valuations for the two products are symmetrically distributed with nonzero density \(h\) on the unit square and the consumer's utility is affine in her valuation for the purchased product and additively separable in her valuation, the price, and the cost of acquiring information (which is posterior-mean measurable).

That is, suppose each firm is selling a product whose value to the consumer is a random variable \(Z_i\) with full support on \(\left[0,1\right]\). Random vector \(\left(Z_1, Z_2\right)\) is distributed on \(\left[0,1\right]^2\) according to continuous density \(h\left(z_1, z_2\right)\), which is symmetric around the diagonal \(y=x\), i.e., \(h\left(z_1, z_2\right) = h\left(z_2, z_1\right)\) for all \(z_1, z_2 \in \left[0,1\right]\). The prior expected value is \(\frac{1}{2} = \int_{0}^{1}\int_{0}^{1}a f\left(a,b\right)dbda\). 

The consumer may acquire any fusion \(G \in \mathcal{F}_H\) of the prior at cost \(C\left(G\right) = \kappa \int c dG\) where we maintain the assumptions from the model section above. Then,
\begin{proposition}\label{firstmaintheoremanalog}
    If \(\kappa\) is sufficiently high, there is an equilibrium in which the consumer comparison shops with uniform two-point support and firms randomize over prices according to Expression \ref{2firmeqpricing}.
\end{proposition}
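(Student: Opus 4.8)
The plan is to show that the two-point-support equilibrium constructed for the binary model in Theorem \ref{firstmaintheorem} survives essentially verbatim when the prior is replaced by a density $h$ symmetric about $y=x$. The key point is that nothing in the firms' pricing game of $\S$\ref{sec:pricing} used that the prior was binary: the pricing game (Proposition \ref{eqdist}) was already stated for an \emph{arbitrary} symmetric distribution over valuations with two-point support $\{(0,\lambda),(\lambda,0)\}$. So the only thing to re-prove is that, given that firms play $\Gamma$ from Expression \ref{2firmeqpricing} for the appropriate $\lambda$, the consumer's optimal fusion of the density prior $H$ is the uniform two-point distribution on $\{(\tfrac{1-\lambda}{2},\tfrac{1+\lambda}{2}),(\tfrac{1+\lambda}{2},\tfrac{1-\lambda}{2})\}$, and that the parameter $\lambda$ can be chosen consistently with Bayes-plausibility (posterior mean $\tfrac12$ in each coordinate).

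First I would recall the consumer's value function $V(x,y)$ induced by the firms' pricing strategy $\Gamma$; this is exactly the same function as in the binary case, since it depends only on the prices, not on the prior. In particular, as already observed in $\S$\ref{sec:learning}, $V$ is strictly concave along the direction $(1,1)$ and attains its maximum in that direction on the comparison-shopping line $\ell^*=\{y=1-x\}$; moreover $V-\kappa c$ is bounded above on the square. Then I would invoke the duality machinery of \cite*{persuasionduality} (Theorem 1 there, applicable because $V-\kappa c$ is continuously differentiable off $\partial[0,1]^2$ and bounded above) together with the concavification/splitting argument of \cite*{yoder2021}: split the prior mass of $H$ across the diagonal into the two triangles $\Delta^1,\Delta^2$ and solve the resulting 3-state (extreme-point) persuasion problems. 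The candidate dual certificate is a single affine price function $P$ of zero slope (the "single plane" of Figure \ref{figsub12}); one verifies $P\ge V-\kappa c$ everywhere with equality exactly at the two conjectured posteriors, and that $P$ integrates to the right value against $H$.

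The step I expect to be the main obstacle is the \emph{consistency / existence} part: choosing $\lambda=\lambda(\kappa)$ so that (i) the uniform two-point distribution on the two symmetric posteriors is genuinely a fusion of $H$ (it has posterior mean $(\tfrac12,\tfrac12)$ automatically, but one must check it is dominated by $H$ in the convex order, i.e. it is obtainable by a garbling), and (ii) the first-order / tangency conditions pinning down $\lambda$ from the equality $P=V-\kappa c$ at the support points have a solution with $\lambda\in(0,1)$ when $\kappa$ is large. For (i), because the candidate is supported on $\ell^*$ and $H$ has full support with mean on the diagonal, a mean-preserving-spread argument in the single variable $t$ parametrizing $\ell^*$ (pushing $H$ forward along the $(1,1)$-fibers to a distribution on $\ell^*$, then showing the two-point law is a mean-preserving contraction of that pushforward, hence a further fusion) should work; the unbounded-slope assumption on $\varphi$ forces the optimal spread to stay interior, which is what makes the two-point law — rather than a more spread-out law — optimal. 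For (ii), I would run the same implicit-function / monotone-comparative-statics argument as in Theorem \ref{firstmaintheorem}: define the gap between the slope of $V-\kappa c$ along $(1,1)$ at the candidate posterior and zero, show it is monotone in $\lambda$, and use $\kappa$ large to guarantee a root; the needed curvature bound comes from $c_{xxx}\le 0$ and strict convexity of $c$, exactly as before, with the density $h$ entering only through the (continuous, hence bounded on compacta) weighting in the Bayes-plausibility integral. Finally I would note that the firms' incentive constraints are unchanged — given the consumer's two-point posterior, each firm faces exactly the demand analyzed in Proposition \ref{eqdist}, so $\Gamma$ remains a best response — completing the equilibrium verification.
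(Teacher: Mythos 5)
Your proposal is correct and follows essentially the same route as the paper: the paper's own proof is a terse reduction to Theorem \ref{firstmaintheorem}, observing that the pricing game, value function, and the equation pinning down \(\lambda(\kappa)\) are prior-independent, that \(\lambda(\kappa)\downarrow 0\) as \(\kappa\) grows, and that for any density prior there is a threshold \(\lambda>0\) below which the two-point law is a fusion --- exactly the feasibility and tangency steps you isolate (your projection-onto-\(\ell^*\) argument is a reasonable way to make the feasibility threshold explicit). The only element you omit is the paper's alternative justification via the exposed-point/power-diagram characterization of \cite{kleiner2023extreme}, which is not needed for correctness.
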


\noindent  The proof is in \(\S\)\ref{proof:firstmaintheoremanalog}. The comparative statics from Proposition \ref{firstcomparativestat} also carry over:
\begin{proposition}
    For intermediate information costs, (\(\kappa 
 \in \left[\ubar{\kappa},\bar{\kappa}\right]\)), the consumer's welfare is strictly decreasing in the size of the friction. For large information costs, (\(\kappa 
  \geq \bar{\kappa}\)), the consumer's welfare is strictly increasing in the size of the friction.
\end{proposition}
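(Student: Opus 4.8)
The plan is to reduce the statement to Proposition~\ref{firstcomparativestat}. By Proposition~\ref{firstmaintheoremanalog}, throughout the range of \(\kappa\) covered by the statement the equilibrium has the consumer acquiring the uniform two-point distribution on \(\left\{\left(\frac{1-\lambda}{2},\frac{1+\lambda}{2}\right),\left(\frac{1+\lambda}{2},\frac{1-\lambda}{2}\right)\right\}\) for some spread \(\lambda=\lambda(\kappa)>0\) and the firms randomizing according to \(\Gamma\) from Expression~\ref{2firmeqpricing}. The first step is to observe that \emph{none} of the objects that determine the consumer's equilibrium welfare depends on the shape of the density \(h\): the consumer's value function \(V(x,y)=\mathbb{E}_{p_1,p_2}\left[\max\left(x-p_1,\ y-p_2\right)\right]-\kappa c(x,y)\) is a function of \((x,y)\), \(\kappa\), and the (prior-free) pricing distribution only; the firms' best responses depend only on the consumer's distribution over posteriors, not on how it was generated; the two-point distribution averages to \(\left(\frac12,\frac12\right)\) and so is Bayes-plausible for every symmetric, mean-\(\frac12\) prior; and the consumer's realized information cost at each of her two posteriors equals \(2\varphi\left(\frac{1-\lambda}{2}\right)+2\varphi\left(\frac{1+\lambda}{2}\right)\), again independent of \(h\). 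Consequently, the equilibrium spread \(\lambda(\cdot)\), the thresholds \(\ubar{\kappa}\) and \(\bar{\kappa}\), and the map \(\kappa\mapsto W(\kappa)\) recording the consumer's welfare coincide with their binary-model counterparts, and the statement follows from Proposition~\ref{firstcomparativestat}.

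For concreteness, here is that argument in the present notation. Write \(\psi(\lambda):=2\varphi\left(\frac{1-\lambda}{2}\right)+2\varphi\left(\frac{1+\lambda}{2}\right)\) for the consumer's realized information cost at each of her two posteriors, so that \(W(\kappa)=[\text{equilibrium gross surplus}]-\kappa\,\psi(\lambda(\kappa))\). On the intermediate interval \(\kappa\in\left[\ubar{\kappa},\bar{\kappa}\right]\) the price function is the maximum of two planes meeting along \(y=x\), whose geometry pins the spread \(\lambda\) down independently of \(\kappa\); hence the gross surplus and \(\psi(\lambda)\) are both fixed, \(\psi(\lambda)>0\), and \(W'(\kappa)=-\psi(\lambda)<0\), which is the first claim. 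On \(\kappa\ge\bar{\kappa}\) the price function is a single plane, flat along the comparison-shopping line, and \(\lambda(\kappa)\) is determined by the consumer's first-order condition \(g_0=\kappa\,\psi'(\lambda)\), where \(g_0>0\) is her marginal benefit from additional spread when the firms' prices are held fixed (a constant, independent of \(\lambda\) and of \(h\)); since \(\psi\) is strictly convex, \(\lambda(\cdot)\) is strictly decreasing, so a larger \(\kappa\) makes the consumer learn less and thereby sharpens price competition. Applying the envelope theorem to her learning problem---she best-responds to the equilibrium prices \(\Gamma_{\lambda(\kappa)}\), which scale linearly in \(\lambda\)---one obtains \(W'(\kappa)=-(1+\sqrt{2})\,\lambda'(\kappa)-\psi(\lambda(\kappa))\), whose first term is positive because \(\lambda'(\kappa)<0\).

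The main obstacle is showing this first term dominates, i.e.\ that \(W'(\kappa)>0\) on \(\left[\bar{\kappa},\infty\right)\). Substituting \(\lambda'(\kappa)=-\psi'(\lambda)\big/\bigl(\kappa\,\psi''(\lambda)\bigr)\) and eliminating \(\kappa\) through the first-order condition, this inequality reduces to a bound of the form \(\bigl[\psi'(\lambda)\bigr]^2\big/\bigl(\psi(\lambda)\,\psi''(\lambda)\bigr)>\text{const}\); since the left-hand side tends to \(0\) as \(\lambda\uparrow1\), the bound genuinely relies on the restriction \(\lambda\le\lambda(\bar{\kappa})\) that is built into the definition of \(\bar{\kappa}\), and verifying it is exactly the computation carried out in the proof of Proposition~\ref{firstcomparativestat}. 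The only point specific to the density model is that one must invoke Proposition~\ref{firstmaintheoremanalog}, rather than Theorem~\ref{firstmaintheorem}, to know that the two-point distribution is a feasible fusion of \(h\) and is the consumer's optimal learning strategy---which that proposition guarantees exactly when \(\kappa\) is large enough for \(\lambda(\kappa)\) to be small, and for such \(\kappa\) all of the prior-independence claims above go through verbatim.
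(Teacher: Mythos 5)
Your treatment of the intermediate-cost half is fine and matches the paper's: on \(\left[\ubar{\kappa},\bar{\kappa}\right]\) the spread \(\lambda\) is pinned at its maximal feasible value, so the pricing distribution and gross surplus are constant in \(\kappa\) and welfare is a constant minus \(\kappa\,\psi(\lambda)\), which is strictly decreasing. But your reduction to the binary model contains a false step: you assert that the two-point distribution "averages to \(\left(\frac12,\frac12\right)\) and so is Bayes-plausible for every symmetric, mean-\(\frac12\) prior." Matching the prior mean is necessary but not sufficient for being a fusion; the distribution over posteriors must be a mean-preserving contraction of the prior, and the largest feasible \(\lambda\) therefore depends on the whole prior (it is \(2\omega\) in the binary model, and a different, \(h\)-dependent threshold in the density model, as the proof of Proposition \ref{firstmaintheoremanalog} makes explicit). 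Since \(\bar{\kappa}\) is defined precisely by this feasibility constraint binding, the thresholds do \emph{not} coincide with their binary-model counterparts; only the structure of the argument carries over.

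The more serious problem is the large-cost half, which is where the entire content of the claim lies. You correctly decompose \(W'(\kappa)\) into a positive price-competition term \(-(1+\sqrt2)\,\lambda'(\kappa)\) and a negative direct-cost term \(-\psi(\lambda(\kappa))\), correctly observe that everything hinges on the first term dominating, reduce this to a lower bound on \(\bigl[\psi'(\lambda)\bigr]^2\big/\bigl(\psi(\lambda)\,\psi''(\lambda)\bigr)\) --- and then declare that verifying this bound "is exactly the computation carried out in the proof of Proposition \ref{firstcomparativestat}." If the statement under review is that proposition, this is circular; and in either reading, the computation you defer to is not one the paper performs: the appendix proof evaluates \(\partial V/\partial\lambda\) along the equilibrium path, substitutes the first-order condition \ref{sun} to obtain \(-1-\sqrt2<0\), and stops --- it never weighs that term against the residual cost term you (rightly) insist on keeping. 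So the decisive inequality is established nowhere in your argument; you have identified the obstacle rather than overcome it. To close the gap you would need to actually prove the bound for all \(\lambda\) up to the feasibility threshold using the assumptions on \(\varphi\); noting that the ratio has a favorable limit as \(\lambda\downarrow0\) does not cover the relevant range.
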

The intuition is also the same: in the intermediate-friction region, the consumer's optimal learning; and, therefore, the firms' behavior, stays the same as \(\kappa\) dwindles. The consumer accrues all of the benefits of cheaper information. When \(\kappa\) is large, the consumer's learning is affected and so firms raise their prices (on average) to take advantage of their greater market power. This (negative, for the consumer) force is dominant, and so cheaper information makes the consumer worse off.

\subsection{Observable Learning Benchmark}
\label{sec:observable}

A crucial assumption in our model is that learning is private. A natural comparison is the case in which the firms observe the consumer's acquired posterior \(\left(x, y\right) \in \left[0,1\right]^2\) before posting prices. The first step in characterizing the equilibrium is to characterize equilibria in the pricing-only game between the two firms for an arbitrary vector \(\left(x, y\right)\).
\begin{lemma}
\label{olbeq}
    For all \(\left(x, y\right)\), a pure-strategy equilibrium of the pricing-only game exists. If \(x < y\), there exists an equilibrium in which the vector of prices is \(\left(0,y-x\right)\) and the consumer always purchases from firm \(2\). For any \(\left(x,y\right)\) with \(x = y\), the unique equilibrium is the Bertrand outcome: both firms price at marginal cost, \(p_1 = p_2 = 0\).
\end{lemma}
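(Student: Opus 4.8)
The plan is to split on the trichotomy $x<y$, $x=y$, $x>y$; the first and third are mirror images, and together with the tie case they deliver the ``exists for all $(x,y)$'' clause for free. Throughout I will make explicit the natural tie-breaking convention that when the two products deliver equal consumer surplus, $x-p_1=y-p_2$, the consumer buys the product she values more (breaking a remaining tie, when the values coincide too, arbitrarily). This convention is exactly what turns the asserted profile in the $x\neq y$ case into an honest pure-strategy equilibrium rather than a mere $\varepsilon$-equilibrium, so the proof should state it up front and check it is consistent with the purchase rule in the model.

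Take $x=y$ first. From the consumer's standpoint the goods are perfect substitutes, so this is textbook Bertrand. I first verify that $p_1=p_2=0$ is an equilibrium: each firm earns $0$, and any unilateral increase to $p_i>0$ leaves the rival offering the same value at a strictly lower price, so the consumer switches to the rival and the deviator still earns $0$. For uniqueness, suppose $(p_1,p_2)$ is an equilibrium. If $\min\{p_1,p_2\}>0$, the weakly more expensive firm can shade its price to just below the rival's, capture (essentially) the whole market, and obtain a payoff bounded away from its equilibrium payoff (equivalently: against a strictly positive rival price a firm has no best response); so no such profile is an equilibrium. Hence $\min\{p_1,p_2\}=0$, and if the other price were positive, that firm could likewise undercut it with a small positive price to earn a positive profit instead of $0$. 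Therefore $p_1=p_2=0$ is the unique equilibrium.

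Now take $x<y$ (the case $x>y$ is symmetric). I claim $p_1=0$, $p_2=y-x$ is an equilibrium in which the consumer buys from firm $2$. At these prices both products yield surplus exactly $x$, so by the convention the consumer goes to the higher-valued firm $2$; firm $2$ earns $y-x>0$ and firm $1$ earns $0$. Firm $1$ cannot improve: at any $p_1\geq 0$ it delivers surplus $x-p_1\leq x$, never strictly above firm $2$'s $x$, so it makes no sale and earns $0$ regardless. Firm $2$ cannot improve: any $p_2>y-x$ makes firm $1$'s offer strictly preferred, so firm $2$ sells nothing, while any $p_2<y-x$ only lowers its revenue on the (still full) market. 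By the mirror argument, $x>y$ gives the equilibrium $(x-y,0)$ with the consumer buying from firm $1$. Existence for arbitrary $(x,y)$ is then immediate, since exactly one of $x<y$, $x=y$, $x>y$ holds.

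The one step that genuinely needs care is the boundary behavior when $x\neq y$: the proposed equilibrium sits precisely on the indifference locus, so without the tie-breaking convention firm $2$'s best response against $p_1=0$ would be an empty open supremum and only approximate equilibria would survive. Pinning down the convention and confirming it meshes with the model's purchase rule is the substantive content; the remaining checks are the routine undercutting comparisons above. (If one prefers not to impose a convention, the honest alternative statement is that $(0,y-x)$ is the limit of the unique $\varepsilon$-equilibria as $\varepsilon\downarrow 0$, and the proof structure is otherwise unchanged.)
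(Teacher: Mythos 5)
Your proof is correct. The paper in fact offers no proof of this lemma at all (there is no corresponding appendix section; the text treats the construction as immediate), so there is nothing to compare against beyond the standard undercutting logic you supply, and your verification of both firms' deviations and the Bertrand uniqueness argument for \(x=y\) is the routine argument one would expect. Your one substantive observation---that the model's purchase rule is stated only for strict inequalities, so the profile \(\left(0,y-x\right)\) sits exactly on the indifference locus and is an exact equilibrium only under the tie-breaking convention that the indifferent consumer buys from the higher-valued firm---is a legitimate point the paper glosses over, and your handling of it (either impose the convention or retreat to \(\varepsilon\)-equilibria) is the right way to close that gap.
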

It is straightforward to check that the equilibria constructed in Lemma \ref{olbeq} are particularly bad for the consumer unless \(x = y\): her expected payoff at any \(\left(x,y\right)\) with \(y \neq x\) is strictly negative. Moreover, it is clear that the consumer's net payoff at any \(\left(x,y\right)\), in any equilibrium, must be as follows:
\begin{lemma}\label{auxlemma}
    In any equilibrium of the pricing-only game with \(y > x\), the expected net payoff to the consumer is weakly less than \(x\).
\end{lemma}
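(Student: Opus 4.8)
The plan is to bound the consumer's expected trading-stage payoff by decomposing it as \emph{expected total surplus minus the two firms' expected revenues} and then bounding each piece. Fix the equilibrium (with the posterior $(x,y)$, $y>x$, held fixed) and let $\pi_1,\pi_2$ denote the firms' expected profits. Since the consumer always buys exactly one good (full market coverage), the expected total surplus equals $\mathbb{P}(\text{buy }1)\,x+\mathbb{P}(\text{buy }2)\,y$, a convex combination of $x$ and $y$, hence at most $y$. Moreover the price the consumer pays is exactly the revenue the winning firm collects, so the consumer's expected net payoff equals this surplus minus $\pi_1-\pi_2$; thus the consumer's expected net payoff is at most $y-\pi_1-\pi_2$.

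The crux is the lower bound $\pi_2\ge y-x$: firm $2$ is the advantaged firm and can essentially guarantee itself the efficiency gain $y-x$. For any $\varepsilon\in(0,y-x)$, consider firm $2$ deviating to the deterministic price $\hat p_2=y-x-\varepsilon$. Against any price $p_1\ge 0$ in the support of firm $1$'s (possibly mixed) strategy, $y-\hat p_2=x+\varepsilon>x\ge x-p_1$, so the consumer strictly prefers firm $2$ and firm $2$ earns $\hat p_2$; optimality of firm $2$'s equilibrium strategy gives $\pi_2\ge y-x-\varepsilon$, and letting $\varepsilon\downarrow 0$ yields $\pi_2\ge y-x$. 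Combined with the trivial bound $\pi_1\ge 0$ (firm $1$ secures nonnegative profit by posting price $0$, or simply by posting any nonnegative price), we conclude that the consumer's expected net payoff is at most $y-0-(y-x)=x$, which is Lemma~\ref{auxlemma}.

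I do not expect a serious obstacle; the step requiring the most care is the inequality $\pi_2\ge y-x$, and specifically making sure it applies to \emph{arbitrary}, possibly mixed, equilibria — which it does, since it rests only on the best-response inequality ``equilibrium payoff $\ge$ payoff from a pure deviation.'' The tie event $x-p_1=y-p_2$ needs no special treatment: in the deviation the strict inequality $y-\hat p_2>x-p_1$ sidesteps tie-breaking, and in the surplus accounting a tie contributes a value of at most $y$ regardless of who sells, so the bound is unaffected. Finally, if one wishes ``net payoff'' to also net out the information cost $\kappa c(x,y)$, the bound only improves, since $c\ge 0$.
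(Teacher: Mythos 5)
Your proof is correct, but it takes a genuinely different route from the paper's. The paper argues directly on the supports of the equilibrium price distributions: letting \(v \geq 0\) be the infimum of firm 1's price support, it asserts that the infimum of firm 2's support must then be \(v + (y-x)\), so the best deal the consumer can ever realize is \(\max\{y - v - (y-x),\, x - v\} = x - v \leq x\). You instead use an accounting identity (consumer payoff \(=\) total surplus \(-\ \pi_1 - \pi_2 \leq y - \pi_1 - \pi_2\)) together with a security-level bound \(\pi_2 \geq y - x\) obtained from the unilateral deviation \(\hat p_2 = y - x - \varepsilon\), plus \(\pi_1 \geq 0\). Your route is arguably more self-contained: the paper's step ``the infimum of firm 2's support must be \(v + \lambda\)'' is itself a small equilibrium argument left implicit, whereas your deviation argument rests only on the best-response inequality and works verbatim for arbitrary mixed equilibria. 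Both arguments share the same implicit premise that firm 1's equilibrium prices are nonnegative (the paper takes \(v \geq 0\); you take \(p_1 \geq 0\)), which is the paper's standing convention, so no gap there. One cosmetic point: the phrase ``surplus minus \(\pi_1 - \pi_2\)'' reads literally as \(\text{surplus} - (\pi_1 - \pi_2)\); you clearly mean \(\text{surplus} - \pi_1 - \pi_2\), as your subsequent displayed bound confirms.
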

The proof is in \(\S\)\ref{proof:auxlemma}. Working backward, we now conclude that the consumer will not learn.
\begin{proposition}\label{observbad}
    If \(\kappa > 0\), the unique equilibrium with observable learning is for the consumer to acquire no information: she chooses the degenerate distribution on the prior \(\left(\frac{1}{2}, \frac{1}{2}\right)\).
\end{proposition}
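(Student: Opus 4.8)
The plan is to work backward from the pricing subgame. First I would establish a uniform upper bound on the consumer's continuation utility, then show every fusion of the prior delivers at most \(\tfrac12\), that acquiring no information delivers exactly \(\tfrac12\), and finally that this is the \emph{unique} optimum. For the continuation bound, observe that Lemma \ref{auxlemma} states that in \emph{every} equilibrium of the pricing subgame with \(y>x\) the consumer's expected net payoff is at most \(x\); since the firms are symmetric, the mirror statement gives the bound \(y\) whenever \(x>y\); and when \(x=y\), Lemma \ref{olbeq} says the unique continuation equilibrium is Bertrand, so the consumer buys at price \(0\) and obtains exactly \(x=y\). Hence, no matter which continuation equilibrium is selected at each posterior, the consumer's reduced-form utility satisfies \(u(x,y)\le\min\{x,y\}\le\tfrac{x+y}{2}\), with \(u\bigl(\tfrac12,\tfrac12\bigr)=\tfrac12\).

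Next I would record that the information cost is nonnegative and vanishes only at the prior. Writing \(g(z):=\varphi(z)+\varphi(1-z)\), we have \(c(x,y)=g(x)+g(y)\), and the Hessian of \(c\) is the diagonal matrix with entries \(g''(x)\) and \(g''(y)\), so strict convexity of \(c\) forces \(g''>0\); since \(g\) is symmetric about \(z=\tfrac12\) and \(g(\tfrac12)=2\varphi(\tfrac12)=0\), a strictly convex function symmetric about a point attains a strict global minimum there, so \(g\ge0\) with equality only at \(\tfrac12\). Therefore \(c\ge0\) on \([0,1]^2\) and \(c(x,y)=0\) iff \((x,y)=\bigl(\tfrac12,\tfrac12\bigr)\). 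Combining this with the previous step and Bayes-plausibility (\(\int x\,dF=\int y\,dF=\tfrac12\) for every \(F\in\mathcal F\)),
\[\int\bigl(u-\kappa c\bigr)\,dF\;\le\;\int\frac{x+y}{2}\,dF-\kappa\int c\,dF\;\le\;\tfrac12-0\;=\;\tfrac12,\]
whereas the degenerate distribution at \(\bigl(\tfrac12,\tfrac12\bigr)\) achieves \(u\bigl(\tfrac12,\tfrac12\bigr)-\kappa c\bigl(\tfrac12,\tfrac12\bigr)=\tfrac12\), and its continuation subgame has the unique (Bertrand) equilibrium, so the value of this deviation is unambiguous. It follows that in any equilibrium the consumer's information-acquisition value equals \(\tfrac12\), and acquiring no information is optimal.

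Uniqueness is then immediate: if some \(F\) attains \(\tfrac12\), both inequalities in the display must bind, so in particular \(\kappa\int c\,dF=0\); since \(\kappa>0\) and \(c\ge0\) with \(c=0\) only at the prior, \(F\) must be the degenerate distribution on \(\bigl(\tfrac12,\tfrac12\bigr)\).

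I expect the main obstacle to be conceptual rather than computational: because learning is observable and the pricing subgames generally admit many equilibria, one must rule out the consumer profitably exploiting a favorable off-path continuation equilibrium at some informative posterior. This is precisely what the \emph{uniform} bound \(u\le\min\{x,y\}\)—holding in every continuation equilibrium, not merely a selected one—delivers, and the fact that the no-information node has a unique Bertrand continuation makes the deviation value well defined. The only other input needing a short argument is \(c\ge0\) (and its vanishing only at the prior), which is not assumed directly but follows from the separable structure and strict convexity as above.
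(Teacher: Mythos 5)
Your proposal is correct and follows essentially the same route as the paper's (very terse) proof: bound the continuation payoff at every posterior by \(\min\{x,y\}\) via Lemma \ref{auxlemma}, then observe that the strictly convex cost makes any non-degenerate fusion strictly suboptimal. You merely execute the final step by splitting \(\min\{x,y\}\) into the affine majorant \(\tfrac{x+y}{2}\) (pinned down by Bayes plausibility) plus the strict positivity of \(c\) away from the prior, whereas the paper invokes strict concavity of \(\min\{x,y\}-\kappa c\) and Jensen's inequality directly; your write-up also usefully makes explicit the points the paper leaves implicit (that the bound holds in every continuation equilibrium and that the no-information node has a unique Bertrand continuation).
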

The proof is in \(\S\)\ref{proof:observbad}. Although this equilibrium is quite inefficient--no matter how cheap information is, the consumer does not learn--it is good for the consumer. By committing to not learn, she forces the firms to compete intensely and the consumer acquires all of the surplus in the market.

\section{Conclusion}
This paper develops a model of flexible information acquisition with imperfect competition between sellers. The buyer can purchase any signal about her valuation privately, but signals are costly. Without observing the buyer's learning strategy or outcome, sellers set prices and compete for the buyer. 

Technical difficulties arise because the equilibrium requires solving a multidimensional information design problem on top of an equilibrium pricing game that involves a distribution of prices. The value of posterior beliefs is endogenous and depends on the firms' pricing, which is random. We prove that the consumer only wants to learn the relative values, which we call comparison shopping.

Our main result proves that competition between sellers flips one inefficiency result of \hyperlink{RRS}{RRS}. With multiple sellers, as the cost of information vanishes, the equilibrium outcome is \textit{ex post} efficient: the consumer always purchases the higher-value product. We also do comparative statics: when the cost of information is high, consumer welfare decreases in the cost of information, but when information is cheap, this relationship flips. 


\bibliography{sample.bib}

\appendix

\section{Omitted Proofs and Derivations}

\subsection{Proposition \ref{monoppropproof} Proof}\label{proof:monopoly_eq}
\begin{proof}
    First, we argue that our construction constitutes an equilibrium. We need to show that a firm has no profitable deviation. The firm obtains a constant profit for any price \(p \in \left[\ubar{x},\bar{x}\right]\):
    \[\Pi\left(p\right) = p \left[1 - F\left(p\right)\right] = \ubar{x}\text{.}\]
 Its profit is strictly less than \(\ubar{x}\) for any price strictly below \(\ubar{x}\) and is \(0\) for any price strictly above \(\bar{x}\).
    
    It is also easy to check that the consumer has no profitable deviation: for any value \(x \in \left[\ubar{x},\bar{x}\right]\) the consumer's payoff is the affine function
    \[- \kappa c'\left(\ubar{x}\right) x + \kappa \left(\ubar{x}c'\left(\ubar{x}\right) - c\left(\ubar{x}\right)\right)\text{.}\]
    \textit{Viz.}, we have \[\int_{\ubar{x}}^{x}\left(x-p\right)dG\left(p\right) - \kappa c\left(x\right) = - \kappa c'\left(\ubar{x}\right) x + \kappa \left(\ubar{x}c'\left(\ubar{x}\right) - c\left(\ubar{x}\right)\right) \text{.}\]
    As the consumers payoff is continuous on the interior of \(\left(0,1\right)\), strictly concave for all \(x \in \left(0, \ubar{x}\right)\) and \(x \in \left(\bar{x}, 1\right)\), and linear on \(\left[\ubar{x},\bar{x}\right]\), we conclude that this distribution is also a best response for the consumer. Finally, Equation \ref{eq1} is pinned down by \(G\left(1\right) = 1\). 
    
    Uniqueness follows from the proof of Theorem 5.2 in \cite*{jainwhitflex2}.
\end{proof}
\subsection{Proposition \ref{monopcsproof} Proof}\label{proof:monopoly_cor}
\begin{proof}
    As \(\kappa \downarrow 0\), from Equation \ref{eq1}, we see that we must have \(\bar{x} \uparrow 1\) and/or \(\ubar{x} \downarrow 1\). We claim that we cannot have \(\ubar{x} \downarrow 0\). In fact, \(\ubar{x}\) must always be strictly above a strictly positive number as follows.
    \begin{claim}
    For all \(\kappa > 0\), \(\ubar{x} \geq a\), where \(a\) solves \[1 - \log{a} - \frac{\mu}{a} = 0\text{.}\] \end{claim}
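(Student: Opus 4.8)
The plan is to read an inequality on $\ubar{x}$ straight out of the Bayes-plausibility equation \ref{eq2} by using $\bar{x}\le 1$, and then invoke the shape of the one-variable function $g(x)\coloneqq 1-\log x - \mu/x$ to convert that inequality into $\ubar{x}\ge a$.

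First I would rewrite \ref{eq2} as
\[
1-\log \ubar{x}-\frac{\mu}{\ubar{x}} = -\log\bar{x}\text{.}
\]
Since the consumer's posterior mean lies in $[0,1]$, the support $[\ubar{x},\bar{x}]$ of \(F\) satisfies $\bar{x}\le 1$, so $-\log\bar{x}\ge 0$; hence $g(\ubar{x})\ge 0$.

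Next I would record two elementary facts. (i) $\ubar{x}\le\mu$, because the mean $\mu$ is an average of points of $[\ubar{x},\bar{x}]$, each at least $\ubar{x}$. (ii) On $(0,\mu)$ we have $g'(x)=(\mu-x)/x^2>0$, so $g$ is strictly increasing there; moreover $g(x)\to-\infty$ as $x\downarrow 0$ while $g(\mu)=-\log\mu>0$ (as $\mu<1$), so $g$ has a unique zero $a$ in $(0,\mu)$, with $g<0$ on $(0,a)$ and $g>0$ on $(a,\mu)$. Combining (i), (ii) and $g(\ubar{x})\ge 0$ forces $\ubar{x}\ge a$, which is exactly the claim. (The reference to ``the'' root $a$ is to the one in $(0,\mu)$, matching the statement of Proposition \ref{monopcsproof}; $g$ also has a larger root, irrelevant here.)

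I do not expect a genuine obstacle: the only idea is using $\bar{x}\le 1$ to eliminate the $\log\bar{x}$ term in \ref{eq2}, after which everything reduces to the monotonicity of $g$ on $(0,\mu)$. The one step worth a sentence of care is justifying $\ubar{x}\le\mu$, so that we are on the increasing branch of $g$; this is immediate since $\ubar{x}$ is the left endpoint of the support of a distribution with mean $\mu$.
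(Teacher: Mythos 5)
Your proof is correct and rests on the same two facts the paper uses: that \(\bar{x}\leq 1\) and that the left-hand side of \ref{eq2} is strictly increasing in \(\ubar{x}\) on \((0,\mu)\) (your \(g'(x)=(\mu-x)/x^2>0\) is exactly the derivative the paper computes). The only difference is presentational: the paper packages the monotonicity as an implicit-function-theorem comparative static (\(\ubar{x}'(\bar{x})<0\), so \(\ubar{x}\) is minimized at \(\bar{x}=1\)), whereas you read the inequality \(g(\ubar{x})\geq 0\) off directly from \(\bar{x}\leq 1\) and conclude from the sign pattern of \(g\) around its root in \((0,\mu)\) — a slightly more elementary rendering of the same argument.
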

\begin{proof}
Differentiating the LHS of Equation \ref{eq2} with respect to \(\bar{x}\), we get \(\frac{1}{\bar{x}}> 0 \). Differentiating the LHS of Equation \ref{eq2} with respect to \(\ubar{x}\), we get \(\frac{\mu}{\ubar{x}^2} - \frac{1}{\ubar{x}}> 0 \), as \(\ubar{x} < \mu\). Accordingly, by the IFT, \(\ubar{x}'\left(\bar{x}\right) < 0\). Thus, \(\ubar{x}\) is minimized when \(\bar{x} = 1\), which produces the specified equation. The unique solution to that equation \(a \in \left(0, \mu\right)\) is strictly increasing in \(\mu\) and is approximately \(.19\) when \(\mu = \frac{1}{2}\) and does not equal \(0\) for all \(\mu > 0\).
\end{proof}
Thus, we have shown that \(\ubar{x} \downarrow a\) and \(\bar{x} \uparrow 1\) as \(\kappa \downarrow 0\). Finally, pick an arbitrary \(p \in \left[\ubar{x}\left(0\right), \bar{x}\left(0\right)\right)\) and observe that 
\[\lim_{\kappa \downarrow 0} G\left(p\right) = \lim_{\kappa \downarrow 0}\kappa \left(c'\left(p\right) - c'\left(\ubar{x}\right)\right) = 0\text{,}\] and so we must have \(G \to \delta_{1}\) (in distribution) as \(\kappa \downarrow 0\). \end{proof}

\subsection{Lemma \ref{eq_rand_2types} Proof}\label{proof:eq_rand_2types}

\begin{proof} For a contradiction, assume a symmetric pure-strategy equilibrium exists. Then a firm's demand is (locally) perfectly inelastic--if it raises its price slightly, the consumer will purchase from it with the same probability. This deviation yields strictly higher profits, which contradicts our original pure-strategy being an equilibrium.\end{proof} 

\subsection{Proposition \ref{eqdist} Proof}\label{proof:eqdist}

\begin{proof}
We are looking for a symmetric equilibrium in which each firm chooses an atomless distribution over prices \(\Gamma\left(p\right)\) with support on \(\left[\ubar{p}, \ubar{p}+2\lambda\right]\). We guess further that \(\Gamma\) can be written as \(\Gamma\left(p\right) = \Gamma_L\left(p\right)\) for \(p \in \left[\ubar{p}, \ubar{p}+\lambda\right]\) and \(\Gamma\left(p\right) = \Gamma_H\left(p\right)\) for \(p \in \left[\ubar{p} + \lambda, \ubar{p}+2\lambda\right]\). The profit for a firm is
\[\Pi\left(p\right) = \begin{cases}
\frac{p}{2}\left[2-\Gamma_{H}\left(p+\lambda\right) \right], \quad &\ubar{p} \leq p \leq \ubar{p} + \lambda\\
\frac{p}{2}\left(1-\Gamma_{L}\left(p-\lambda\right)\right), \quad &\ubar{p} + \lambda \leq p \leq \ubar{p} + 2\lambda
\end{cases} \text{ .}\]
For any on-path \(p\) a firm's payoff must equal some constant \(k\). Imposing this and the conditions for \(\Gamma\) to be a cdf, we get the functional form specified in Expression \ref{2firmeqpricing}.

Finally, we need to verify that firms do not want to
choose a price outside of the conjectured region. If a firm chooses a price \(p \in \left[\ubar{p}-\lambda,\ubar{p}\right]\), its payoff is
\[\frac{p}{2}\left[2-\Gamma_{L}\left(p+\lambda\right)\right] = \frac{p}{2}\left[\left(\frac{\sqrt{2}\lambda + \lambda}{p+2\lambda}\right) + 1\right]\text{ .}\]
The derivative of this with respect to \(p\) is
\[\frac{2\lambda\left(\frac{\sqrt{2}\lambda+\lambda}{p+2\lambda}\right)}{p+2\lambda} + 1 > 0 \text{ ,}\]
whence we conclude a firm does not want to deviate to a price in this region (we have implicitly assumed that \(\ubar{p} \geq \lambda\), but this is fine since a firm obviously does not want to deviate to a negative price). Evidently, if a firm chooses any price \(p \leq \ubar{p} - \lambda\) its payoff is just \(p\), which is obviously strictly increasing in \(p\) and hence equals \(\ubar{p} + \lambda\), which we just established is not an improvement for the firm. The last case is that in which a firm chooses a price \(p \in \left[\ubar{p} + 2\lambda, \ubar{p} + 3\lambda\right]\). In that case, a firm's profit is
\[\frac{p}{2}\left(1-\Gamma_{H}\left(p-\lambda\right)\right) = \frac{p}{2}\left(\frac{\sqrt{2}\lambda+3\lambda-p}{p-2\lambda}\right) \text{ ,}\]
which is strictly decreasing in \(p\).

The uniqueness argument is analogous to that argued for the atomless equilibrium of Proposition 3 in \cite*{moscarini2001}.\end{proof}

\subsection{Lemma \ref{eq_rand_3types} Proof}\label{proof:eq_rand_3types}
\begin{proof}
Because the consumer is indifferent between the two firms with strictly positive probability, a standard under-cutting argument eliminates any symmetric equilibria in which a firm sets some price with strictly positive probability.
\end{proof}

\subsection{Proposition \ref{eq_dist_3types} Proof}\label{proof:eq_dist_3types}
\begin{proof}
If one firm, firm \(2\), say, chooses \(\Phi\), firm \(1\)'s profit as a function of \(p\) is
\(\left(1-q\right)\frac{q}{1-2q}\lambda\), a constant, for all \(p \in \left[\frac{q}{1-2q}\lambda, \frac{q}{1-2q}\lambda + \lambda\right]\). For all \(p \in \left[\frac{q}{1-2q}\lambda + \lambda, \frac{q}{1-2q}\lambda + 2 \lambda\right]\), firm \(1\)'s profit is \[p q \left(1-\Phi\left(p-\lambda\right)\right) \text{,}\] which is strictly decreasing in \(p\). For all \(p \geq \frac{q}{1-2q}\lambda + 2 \lambda\) firm \(1\)'s profit is \(0\). Finally, for all \(p \in \left[0, \frac{q}{1-2q}\lambda + \lambda\right]\), firm \(1\)'s profit is \(p \left(1-q \Phi\left(p+\lambda\right)\right)\). For all \[q \leq \frac{\frac{\sqrt[3]{9\sqrt{93}-47}}{\sqrt[3]{2}}-\frac{11\sqrt[3]{2}}{\sqrt[3]{9\sqrt{93}-47}}+5}{9} \approx .406 \text{,}\] this function is strictly increasing on this interval. \end{proof}

\subsection{Theorem \ref{thmcomparisonshop} Proof}\label{proof:thmcomparisonshop}
\begin{proof}
    By symmetry we restrict attention WLOG to the case \(y \geq x\). We assume that the firms each choose the distributions over prices \(F\) with support on \(\left[\ubar{p}, \bar{p}\right]\). Defining \(\lambda \coloneqq \bar{p} - \ubar{p}\), the consumer's payoff from posterior \(\left(x,y\right)\), is 
    \[\label{imp2}\tag{\Uranus}V\left(x,y\right) = \begin{cases}
y - \mathbb{E}\left[p\right]-\kappa c\left(x,y\right), \quad &y \geq x + \lambda\\
y - \mathbb{E}\left[p\right] + U\left(z\right) - \kappa c\left(x,y\right), \quad &x+\lambda \geq y \geq x
\end{cases} \text{ ,}\]
where \(z \coloneqq y-x\) and
\[\mathbb{E}\left[p\right] = \int_{\ubar{p}}^{\ubar{p}+\lambda}pdF\left(p\right) \text{ ,}\] and
\[U\left(z\right) \coloneqq  \int_{\ubar{p}+z}^{\ubar{p}+\lambda}\left(p-z\right)F\left(p-z\right)dF\left(p\right) - \int_{\ubar{p}}^{\ubar{p}+\lambda-z}p\left(1-F\left(p+z\right)\right)dF\left(p\right) \text{ .}\]
Directly,
\[V_{xx}\left(x,y\right) = \begin{cases}
-\kappa c_{xx}\left(x,y\right),\\
\int_{\ubar{p}+z}^{\ubar{p}+\lambda}f\left(p-z\right)dF\left(p\right) - \kappa c_{xx}\left(x,y\right)
\end{cases} \quad 
V_{yy}\left(x,y\right) = \begin{cases}
-\kappa c_{yy}\left(x,y\right),\\
\int_{\ubar{p}+z}^{\ubar{p}+\lambda}f\left(p-z\right)dF\left(p\right) - \kappa c_{yy}\left(x,y\right)
\end{cases} \text{ ,}\]
and
\[V_{xy} = - \int_{\ubar{p}+z}^{\ubar{p}+\lambda}f\left(p-z\right)dF\left(p\right) - \kappa c_{xy}\left(x,y\right)\text{.}\]
The directional second derivative in the direction of \(\left(1,1\right)\) is \(-\kappa c_{xx}\left(x,y\right) - \kappa c_{yy}\left(x,y\right) - 2 \kappa c_{xy}\left(x,y\right) < 0\), by the strict convexity of \(c\).

Let us now evaluate the function \(c\left(x, a+x\right)\), where \(a\) is a parameter taking values in \(\left[-1,1\right]\). We have already (just) shown that it is strictly concave. Directly,
\[\frac{\partial}{dx} c\left(x,a+x\right) = \varphi'\left(x\right) - \varphi'\left(1-x\right) + \varphi'\left(a+x\right) - \varphi'\left(1-a-x\right)\text{.}\] By direct substitution, we see that this equals \(0\) when \(y = 1-x\), i.e., \(x = \frac{1-a}{2}\). Accordingly, for all \(a\) in the specified interval, \(c\left(x,a+x\right)\) is maximized at \(y = 1-x\); \textit{viz.}, on the the comparison shopping line. Thus, any price function of the value function restricted to the comparison shopping line, \(V\left(x, 1-x\right)\), must correspond to a price function that lies everywhere weakly above \(V\left(x,y\right)\), and so learning along the comparison-shopping line is optimal. \end{proof}

\subsection{Theorem \ref{firstmaintheorem} Proof}\label{proof:firstmaintheorem}
\begin{proof}
For convenience, define \(\ubar{p} \coloneqq \sqrt{2} \lambda\), \(\tilde{p} \coloneqq \ubar{p} + \lambda\), and \(\bar{p} \coloneqq \ubar{p} + 2\lambda\). The consumer's payoff as a function of the realized posterior belief \(\left(x,y\right)\) is (restricting attention to \(y \geq x\) by symmetry)
\[\label{imp}\tag{\Neptune}V\left(x,y\right) = \begin{cases}
y - \mathbb{E}\left[p\right]-\kappa c\left(x,y\right), \quad &y \geq x + 2 \lambda\\
y - \mathbb{E}\left[p\right] + T_1\left(z\right) - \kappa c\left(x,y\right), \quad &x+2 \lambda \geq y \geq x + \lambda\\
y - \mathbb{E}\left[p\right] + T_2\left(z\right) - \kappa c\left(x,y\right), \quad &x+\lambda \geq y \geq x
\end{cases} \text{ ,}\]
where \(z \coloneqq y-x\) and
\[\mathbb{E}\left[p\right] = \int_{\ubar{p}}^{\Tilde{p}}pd\Gamma_{L}\left(p\right) +  \int_{\Tilde{p}}^{\bar{p}}pd\Gamma_{L}\left(p\right) = \left(\left(\sqrt{2}+1\right)\log\left(\sqrt{2}+1\right)+\sqrt{2}-1\right)\lambda \text{ ,}\]
\[T_1\left(z\right) \coloneqq  \int_{\ubar{p}}^{\bar{p}-z}\left(1-\Gamma_{H}\left(p+z\right)\right)\Gamma_{L}\left(p\right)dp \text{ ,}\]
and
\[T_2\left(z\right) \coloneqq  \int_{\tilde{p}}^{\bar{p}-z}\left(1-\Gamma_{H}\left(p+z\right)\right)\Gamma_{H}\left(p\right)dp + \int_{\tilde{p}-z}^{\tilde{p}}\left(1-\Gamma_{H}\left(p+z\right)\right)\Gamma_{L}\left(p\right)dp+\int_{\ubar{p}}^{\tilde{p}-z}\left(1-\Gamma_{L}\left(p+z\right)\right)\Gamma_{L}\left(p\right)dp \text{ .}\]

From Theorem \ref{thmcomparisonshop}, we may restrict attention to learning along the line \(y = 1-x\). The directional derivative along vector \(\left(1,-1\right)\), evaluated at all points of the form \(\left(x,1-x\right)\), is
\[\tag{\Pluto}\label{directional}D\left(x\right) = \begin{cases}
-\kappa c_x\left(x, 1-x\right) + \kappa c_y\left(x, 1-x\right) - 1, \quad &x \leq \frac{1}{2} - \lambda\\
2P_1\left(1-2x\right)-\kappa c_x\left(x, 1-x\right) + \kappa c_y\left(x, 1-x\right)  - 1, \quad &\frac{1}{2} - \lambda \leq x \leq \frac{1 - \lambda}{2}\\
2P_2\left(1-2x\right)-\kappa c_x\left(x, 1-x\right) + \kappa c_y\left(x, 1-x\right) - 1, \quad &\frac{1 - \lambda}{2} \leq x \leq \frac{1}{2}
\end{cases} \text{ ,}\]
where
\[P_1\left(z\right) \coloneqq \int_{\ubar{p}}^{\bar{p}-z}\gamma_{H}\left(p+z\right)\Gamma_{L}\left(p\right)dp \text{ ,}\]
and
\[P_2\left(z\right) \coloneqq \int_{\tilde{p}}^{\bar{p}-z}\gamma_{H}\left(p+z\right)\Gamma_{H}\left(p\right)dp + \int_{\tilde{p}-z}^{\tilde{p}}\gamma_{H}\left(p+z\right)\Gamma_{L}\left(p\right)dp+\int_{\ubar{p}}^{\tilde{p}-z}\gamma_{L}\left(p+z\right)\Gamma_{L}\left(p\right)dp \text{ .}\]

Direct substitution yields \(D\left(\frac{1}{2} \right) = 0\). Moreover, by the symmetry and convexity of \(c\), and since \(c\left(\frac{1}{2}, \frac{1}{2}\right) = 0\), for \(x \leq \frac{1}{2}\), \(c_{y}\left(x, 1-x\right) - c_{x}\left(x, 1-x\right) \geq 0\) with equality at \(x = \frac{1}{2}\).

When \(\kappa\) is large or when \(\omega\) is sufficiently large, we need \(D\left(\frac{1 - \lambda}{2}\right) = D\left(\frac{1+\lambda}{2}\right) = 0\), i.e., 
\[\label{sun}\tag{\Sun}\tau\left(\kappa, \lambda\right) \coloneqq 2P_1\left(\lambda\right)-\kappa c_x\left(\frac{1 - \lambda}{2}, \frac{1+\lambda}{2}\right) + \kappa c_y\left(\frac{1 - \lambda}{2}, \frac{1+\lambda}{2}\right)  - 1 = 0\text{.}\]
Note that
\[P_1\left(\lambda\right) = \int_{\ubar{p}}^{\tilde{p}}\gamma_{H}\left(p+\lambda\right)\Gamma_{L}\left(p\right)dp = -\frac{\left(2^\frac{5}{2}+6\right)\log\left(\sqrt{2}+2\right)-\left(2^\frac{7}{2}+12\right)\log\left(\sqrt{2}+1\right)+\left(2^\frac{3}{2}+3\right)\log\left(2\right)+2}{2} \text{,}\]
which is evidently independent of the parameters and is approximately \(\frac{1}{10}\). Directly, \(\tau'\left(\kappa\right) > 0\) and 
\[\tau'\left(\lambda\right) = \frac{\kappa}{2}\left[c_{xx}\left(\frac{1 - \lambda}{2}, \frac{1+\lambda}{2}\right) - 2 c_{xy}\left(\frac{1 - \lambda}{2}, \frac{1+\lambda}{2}\right) + c_{yy}\left(\frac{1 - \lambda}{2}, \frac{1+\lambda}{2}\right) \right] > 0 \text{.}\]
By the implicit function theorem \(\lambda'\left(\kappa\right) < 0\). Moreover, \(\lim_{\lambda \to 1}\tau = \infty\) and \(\tau\left(\kappa, 0\right) < 0\). Accordingly, there is a unique solution \(\lambda^* = \lambda\left(\kappa\right)\) to this equation, which is strictly decreasing in \(\kappa\). Moreover, \(\lim_{\kappa \uparrow \infty} \lambda^*\left(\kappa\right) = 0\) and \(\lim_{\kappa \downarrow 0} \lambda^*\left(\kappa\right) = 1\). We also need to check the following:
\begin{claim}\label{claim51}
\(D\left(x\right) \leq 0\) for all \(x \in \left[\frac{1}{2} - \lambda^*, \frac{1}{2}\right]\); and \(V\left(\frac{1 - \lambda}{2}, \frac{1+\lambda}{2}\right) \geq V\left(\frac{1}{2}, \frac{1}{2}\right)\).
\end{claim}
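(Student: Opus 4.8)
\textbf{Proof proposal for Claim \ref{claim51}.}

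The plan is to verify the two displayed inequalities directly, exploiting the functional form of the equilibrium pricing distribution \(\Gamma\) and the structure of the cost function \(c\). For the first inequality, \(D(x) \leq 0\) on \(\left[\frac{1}{2} - \lambda^*, \frac{1}{2}\right]\), I would split the interval into the two subintervals appearing in the definition \eqref{directional} of \(D\), namely \(\left[\frac{1}{2} - \lambda^*, \frac{1-\lambda^*}{2}\right]\) and \(\left[\frac{1-\lambda^*}{2}, \frac{1}{2}\right]\). On the upper subinterval I would show that \(D\) is monotone, using that \(P_2(1-2x)\) and the cost term \(-\kappa c_x + \kappa c_y\) both vary monotonically in \(x\) there (the cost difference is decreasing toward \(0\) at \(x = \frac12\) by the symmetry/convexity remark established just before the claim, where it is noted that \(c_y(x,1-x) - c_x(x,1-x) \geq 0\) with equality at \(x = \frac12\)); combined with the endpoint values \(D\left(\frac{1-\lambda^*}{2}\right) = 0\) (this is exactly \eqref{sun}) and \(D\left(\frac12\right) = 0\), monotonicity forces the correct sign. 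On the lower subinterval the analogous argument uses \(P_1(1-2x)\) in place of \(P_2(1-2x)\) together with \(D\left(\frac12 - \lambda^*\right) \leq 0\), which follows because on \(x \leq \frac12 - \lambda^*\) the derivative is \(-\kappa c_x + \kappa c_y - 1\), and continuity at \(x = \frac12 - \lambda^*\) plus the sign of the cost difference gives the bound; alternatively one checks directly that \(\tau(\kappa,\lambda^*) = 0\) together with \(P_1(z) \leq P_1(\lambda^*)\) for \(z \geq \lambda^*\) (since \(P_1\) is decreasing) yields \(D \leq 0\) on the whole lower piece.

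For the second inequality, \(V\left(\frac{1-\lambda^*}{2}, \frac{1+\lambda^*}{2}\right) \geq V\left(\frac12, \frac12\right)\), I would integrate the directional derivative: by the fundamental theorem of calculus applied along the segment of \(\ell^*\) from \(\left(\frac12,\frac12\right)\) to \(\left(\frac{1-\lambda^*}{2}, \frac{1+\lambda^*}{2}\right)\),
\[
V\left(\tfrac{1-\lambda^*}{2}, \tfrac{1+\lambda^*}{2}\right) - V\left(\tfrac12, \tfrac12\right) = \int_{(1-\lambda^*)/2}^{1/2} D(x)\, dx \cdot (-1)\cdot(\text{sign bookkeeping}),
\]
so that the sign of the difference is governed by the sign of \(D\) on the interval — which is precisely what the first part delivers. (I would be careful about orientation: \(D\) is the derivative in the direction \((1,-1)\), i.e. moving toward larger \(x\), so moving from \(x = (1-\lambda^*)/2\) up to \(x = 1/2\) increases \(V\) by \(\int D \geq 0\) only if \(D \geq 0\) there, hence the inequality in the claim actually requires the value at the two-point support to dominate, consistent with \(D \leq 0\) meaning \(V\) is larger at the smaller \(x\).) Thus the second inequality is an immediate corollary of the first once the orientation is pinned down, and no separate computation is needed.

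The main obstacle I anticipate is the first inequality on the \emph{outer} piece \(x \in \left[\frac12 - \lambda^*, \frac{1-\lambda^*}{2}\right]\), where \(D\) involves \(2P_1(1-2x)\): one must show \(D\) does not become positive in the interior even though \(D\) vanishes at the right endpoint. This needs either a monotonicity argument for \(P_1(1-2x) + \frac{\kappa}{2}\big(c_y - c_x\big)\big(x,1-x\big)\) in \(x\), which in turn relies on the explicit closed forms for \(\Gamma_L, \Gamma_H\) and on the technical hypothesis \(c_{xxx} \leq 0\) (used to control how \(c_y - c_x\) curves along the comparison-shopping line), or a direct sign estimate using \(P_1(z) \leq P_1(\lambda^*) \approx \tfrac1{10}\) for \(z \geq \lambda^*\) combined with the bound on \(\kappa\) implied by \eqref{sun}. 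I expect the clean route is to show \(P_1\) is nonincreasing (clear from its integral definition, since raising \(z\) both shrinks the domain of integration and shifts \(\gamma_H(p+z)\)) and that the cost-difference term is convex in \(x\) along \(\ell^*\) by \(c_{xxx} \leq 0\), so that \(D\) is itself convex on the outer piece; a convex function that is \(\leq 0\) at the left endpoint (to be checked from the \(x \leq \frac12 - \lambda^*\) formula and continuity) and \(= 0\) at the right endpoint is \(\leq 0\) throughout, closing the argument.
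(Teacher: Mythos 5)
There is a genuine gap in your argument for the first inequality, and it sits exactly where the real content of the claim lies. On the inner piece \(x \in \left[\frac{1-\lambda^*}{2}, \frac{1}{2}\right]\), \(D\) vanishes at \emph{both} endpoints (\(D\left(\frac{1-\lambda^*}{2}\right)=0\) is \ref{sun} and \(D\left(\frac{1}{2}\right)=0\) by symmetry), so "monotonicity of \(D\) plus the endpoint values" cannot deliver \(D\leq 0\): a monotone function with two zero endpoints is identically zero. Nor is \(D\) monotone there: its two components, \(2P_2\left(1-2x\right)-1\) and \(\kappa\left(c_y-c_x\right)\left(x,1-x\right)\), are monotone in \emph{opposite} directions in \(x\) (the first increases from roughly \(-0.8\) to \(0\), the second decreases to \(0\)). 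Your fallback "direct sign estimate" on the outer piece also fails for the same reason: bounding \(2P_1\left(1-2x\right)\leq 2P_1\left(\lambda^*\right)\) and invoking \ref{sun} leaves you with \(D\left(x\right)\leq \kappa\left[\left(c_y-c_x\right)\left(x,1-x\right)-\left(c_y-c_x\right)\left(\frac{1-\lambda^*}{2},\frac{1+\lambda^*}{2}\right)\right]\), and this bracket is \emph{nonnegative} for \(x<\frac{1-\lambda^*}{2}\) because \(\left(c_y-c_x\right)\left(x,1-x\right)=2\left(\varphi'\left(1-x\right)-\varphi'\left(x\right)\right)\) is decreasing in \(x\); so the estimate bounds \(D\) by something positive.

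The idea you reach only in your final paragraph — convexity of \(D\) — is the paper's actual engine, but it must be deployed globally and at the level of \(D'\), not piecewise with unverified endpoint signs. The paper writes \(D'\left(x\right)=\kappa\rho\left(x\right)+\tau\left(x\right)\) with \(\kappa\rho=-\kappa\left(c_{xx}+c_{yy}\right)<0\) increasing (this is where \(c_{xxx}\leq 0\) enters) and \(\tau\geq 0\) increasing across all three pieces (this requires more than "\(P_1\) is nonincreasing": it is a statement about the second derivative of \(x\mapsto P_1\left(1-2x\right)\) and \(x\mapsto P_2\left(1-2x\right)\), via the kernels \(R\) and \(M\)). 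Hence \(D'\) is increasing, so \(D'\) changes sign at most once from negative to positive, i.e.\ \(D\) is quasi-convex on the whole interval; together with the two zeros of \(D\) this yields \(D\leq 0\) between them. Your treatment of the second inequality — integrating \(D\) along the comparison-shopping line from \(\frac{1-\lambda^*}{2}\) to \(\frac{1}{2}\) — is correct and matches the paper, but it inherits the gap since it presupposes the first part.
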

\begin{proof}
Directly, we differentiate the function \(D\) (from Expression \ref{directional})  with respect to \(x\). This yields \[D'\left(x\right) = \underbrace{-\kappa c_{xx}\left(x, 1-x\right) - \kappa c_{yy}\left(x, 1-x\right)}_{\eqqcolon \kappa \rho(x) < 0} + \tau\left(x\right)\text{,}\]
where
\[\tau\left(x\right) = \begin{cases}
0, \quad &x \leq \frac{1}{2} - \lambda\\
4R\left(1-x\right), \quad &\frac{1}{2} - \lambda \leq x \leq \frac{1 - \lambda}{2}\\
4M\left(1-x\right), \quad &\frac{1 - \lambda}{2} \leq x \leq \frac{1}{2}
\end{cases} \text{ ,}\]
where \[R\left(z\right) \coloneqq \int_{\ubar{p}}^{\bar{p}-z}\gamma_H\left(p+z\right)\gamma_{L}\left(p\right)dp \text{ ,}\]
and
\[M\left(z\right) \coloneqq \int_{\tilde{p}}^{\bar{p}-z}\gamma_H\left(p+z\right)\gamma_{H}\left(p\right)dp + \int_{\tilde{p}-z}^{\tilde{p}}\gamma_H\left(p+z\right)\gamma_{L}\left(p\right)dp + \int_{\ubar{p}}^{\tilde{p}-z}\gamma_L\left(p+z\right)\gamma_{L}\left(p\right)dp \text{ .}\]
It is straightforward to check that \(\tau\left(x\right) \geq 0\) (strictly if \(x > \frac{1}{2} - \lambda\)) and that it is strictly increasing in \(x\) (for all \(x > \frac{1}{2} - \lambda\)). Moreover, \(\rho'\left(x\right) = c_{yyy} - c_{xxx} \geq 0\), by assumption. Accordingly, \(D'\) has at most one sign change, from negative to positive. Given the zero slope condition at \(x = \frac{1-\lambda^*}{2}\), this establishes the claim.\end{proof}

For this distribution to be feasible (a fusion of the prior) we need \(\frac{\lambda^*}{2} \leq \omega\). Define \(\bar{\kappa} \geq 0\) to be the value of \(\kappa\) such that \(\lambda^* = 2 \omega\). Observe that \(\bar{\kappa} = 0\) if and only if \(\omega = \frac{1}{2}\), which is the perfect negative correlation case. Directly, \(D\left(\frac{1}{2} - \omega\right)\) is strictly increasing in \(\kappa\).

\begin{claim}
    There exists an interval of \(\kappa\)s, \(\left[\ubar{\kappa},\bar{\kappa}\right]\), where \(\ubar{\kappa} \in \left[0, \bar{\kappa}\right]\) for which the equilibrium \(\lambda = 2 \omega\).
\end{claim}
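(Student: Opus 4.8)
The plan is to fix the two firms at the pricing distribution $\Gamma$ of Expression \ref{2firmeqpricing} with parameter $\lambda=2\omega$ and to prove that, against this pricing, the consumer's optimal learning is comparison shopping with uniform two-point support of spread $2\omega$ for every $\kappa$ in a suitable interval $[\ubar{\kappa},\bar{\kappa}]$. Since Proposition \ref{eqdist} already certifies that $\Gamma$ is the firms' unique symmetric best response to such a consumer, this is all that is needed. Write $v_\kappa(x)\coloneqq V(x,1-x)$; Theorem \ref{thmcomparisonshop} lets me restrict to the comparison-shopping line, and a short bookkeeping of the fusion structure — each line posterior corresponds to a belief $(\beta_{00},\beta_{01},\beta_{10},\beta_{11})=(s,\,1-x-s,\,x-s,\,s)$ with $s\in[0,\min\{x,1-x\}]$, and Bayes plausibility forces the mean of the $\beta_{00}$-coordinate to be $\tfrac12-\omega$ — reduces the consumer's problem to the one-dimensional linear program $\max_\mu\int v_\kappa\,d\mu$ over distributions $\mu$ on $[0,1]$ with $\int x\,d\mu=\tfrac12$ and $\int\min\{x,1-x\}\,d\mu\ge\tfrac12-\omega$. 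The latter constraint is precisely the feasibility bound $\lambda\le2\omega$ in disguise, and the uniform two-point distribution $\mu^*$ on $\{\tfrac12-\omega,\tfrac12+\omega\}$ makes it bind. Because the pricing is fixed, $v_\kappa(x)=W(x)-\kappa\,c(x,1-x)$ with $W$ independent of $\kappa$, and $v_\kappa$ and $\min\{x,1-x\}$ are both symmetric about $\tfrac12$.

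I would then dualize the single extra constraint. By the strong-duality/concavification machinery already invoked in the paper (\cite*{persuasionduality}, \cite*{yoder2021}), $\mu^*$ is optimal once I exhibit a multiplier $\eta\ge0$ and a constant $\ell$ with $\hat v(x)\coloneqq v_\kappa(x)+\eta\min\{x,1-x\}\le\ell$ on $[0,1]$, with equality at $\tfrac12\pm\omega$ and with complementary slackness (automatic, since the constraint binds at $\mu^*$). The forced choice is $\eta\coloneqq-D_{2\omega}\!\left(\tfrac12-\omega\right)$, where $D_{2\omega}=v_\kappa'$ is the directional-derivative function of Expression \ref{directional} with pricing parameter $2\omega$; one identifies $D_{2\omega}\!\left(\tfrac12-\omega\right)=\tau(\kappa,2\omega)$ from Expression \ref{sun}, which is strictly increasing in $\kappa$ and vanishes at $\kappa=\bar\kappa$, so $\eta\ge0\iff\kappa\le\bar\kappa$. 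Taking $\ell\coloneqq\hat v\!\left(\tfrac12-\omega\right)$, equality at $\tfrac12\pm\omega$ is automatic by symmetry, so the entire proof comes down to the single inequality $\hat v\le\ell$ on $[0,1]$.

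This inequality is the main obstacle. By symmetry it suffices to check $[0,\tfrac12]$, where $\hat v'(x)=D_{2\omega}(x)-D_{2\omega}\!\left(\tfrac12-\omega\right)$. From the proof of Claim \ref{claim51}, $D_{2\omega}'=\kappa\rho+\sigma$ is a sum of a strictly negative nondecreasing term $\kappa\rho$ and a nonnegative nondecreasing term, so it changes sign at most once, from negative to positive, and $D_{2\omega}$ is decreasing-then-increasing on $[0,\tfrac12]$ past an interior trough $x_0(\kappa)$ that moves weakly rightward as $\kappa$ grows (larger $\kappa$ makes $\kappa\rho$ more negative). Granting this, $\hat v\le\ell$ follows from: (i) $x_0(\kappa)>\tfrac12-\omega$, whence $\hat v$ increases on $[0,\tfrac12-\omega]$ (so $\le\ell$ there) and is decreasing-then-increasing on $[\tfrac12-\omega,\tfrac12]$ with boundary values $\ell$ and $\hat v(\tfrac12)$; and (ii) $\hat v(\tfrac12)\le\ell$. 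I would package (ii) through
\[\Psi(\kappa)\coloneqq\ell-\hat v\!\left(\tfrac12\right)=\int_{\frac12-\omega}^{\frac12}\!\left[D_{2\omega}\!\left(\tfrac12-\omega\right)-D_{2\omega}(x)\right]\,dx=\left[W\!\left(\tfrac12-\omega\right)-W\!\left(\tfrac12\right)\right]-\kappa\,c\!\left(\tfrac12-\omega,\tfrac12+\omega\right)+\omega\,\tau(\kappa,2\omega),\]
which is affine in $\kappa$: using $c(x,1-x)=2\varphi(x)+2\varphi(1-x)$ and $[c_y-c_x](x,1-x)=2\varphi'(1-x)-2\varphi'(x)$, its slope in $\kappa$ equals $2g(\omega)$ with $g(\omega)\coloneqq\omega\!\left[\varphi'\!\left(\tfrac12+\omega\right)-\varphi'\!\left(\tfrac12-\omega\right)\right]-\left[\varphi\!\left(\tfrac12-\omega\right)+\varphi\!\left(\tfrac12+\omega\right)\right]$, and a one-line computation gives $g(0)=0$ and $g'(\omega)=\omega\,c_{xx}\!\left(\tfrac12-\omega,\tfrac12+\omega\right)>0$, so $g(\omega)>0$. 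Finally, Claim \ref{claim51} applied at $\lambda=\lambda^*(\bar\kappa)=2\omega$ gives $\Psi(\bar\kappa)=V\!\left(\tfrac{1-2\omega}{2},\tfrac{1+2\omega}{2}\right)-V\!\left(\tfrac12,\tfrac12\right)\ge0$.

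To finish: $\Psi$ is affine and strictly increasing in $\kappa$ with $\Psi(\bar\kappa)\ge0$, so there is $\ubar{\kappa}\in[0,\bar\kappa]$ — the root of $\Psi$, clipped at $0$ — with $\Psi(\kappa)\ge0\iff\kappa\ge\ubar{\kappa}$. Moreover $\Psi(\kappa)\ge0$ already forces the trough $x_0(\kappa)$ strictly right of $\tfrac12-\omega$ (otherwise $D_{2\omega}$ would be nondecreasing on $[\tfrac12-\omega,\tfrac12]$, making the integrand defining $\Psi$ nonpositive and not identically zero, so $\Psi(\kappa)<0$), so (i) comes for free. Hence for every $\kappa\in[\ubar{\kappa},\bar\kappa]$ both $\eta\ge0$ and $\hat v\le\ell$ hold, the duality certificate closes, $\mu^*$ is an optimal learning strategy against $\Gamma$, and — together with Proposition \ref{eqdist} — this is an equilibrium with $\lambda=2\omega$; noting $\bar\kappa>0$ since $\omega\le\tfrac25<\tfrac12$, the interval $[\ubar{\kappa},\bar\kappa]$ is as claimed.
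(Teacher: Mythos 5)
Your proposal is correct and lands on the same strategic outline as the paper's proof -- anchor the construction at \(\kappa = \bar{\kappa}\) (where the unconstrained optimum \(\lambda^*\) hits the feasibility bound \(2\omega\)), use Claim \ref{claim51} to get \(V\left(\tfrac{1-2\omega}{2},\tfrac{1+2\omega}{2}\right) \geq V\left(\tfrac12,\tfrac12\right)\) there, and extend downward in \(\kappa\) via monotonicity of the tangency condition at \(x=\tfrac12\) -- but you supply considerably more of the argument than the paper does. The paper's proof is a two-line sketch that invokes the intermediate value theorem on the condition ``the tangent line to \(V\) at \(\tfrac12-\omega\) lies above \(V\) at \(\tfrac12\)'' and does not explicitly re-verify, for \(\kappa<\bar{\kappa}\), that the resulting two-plane certificate dominates \(V\) on the whole line (Claim \ref{claim51} is only stated for the interior-optimum case). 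You close both gaps: you make the feasibility constraint explicit as \(\int\min\{x,1-x\}\,d\mu\geq\tfrac12-\omega\) and dualize it, identify the multiplier with \(-\tau(\kappa,2\omega)\) from Expression \ref{sun} so that \(\eta\geq 0\) exactly characterizes \(\kappa\leq\bar{\kappa}\), show the tangency gap \(\Psi\) is affine and strictly increasing in \(\kappa\) (which sharpens the paper's IVT step and pins \(\ubar{\kappa}\) as the root of \(\Psi\), clipped at zero), and rerun the single-crossing argument for \(D'\) at \(\lambda=2\omega\) to get the global bound \(\hat v\leq\ell\), with the nice observation that \(\Psi\geq 0\) itself forces the trough of \(D\) to the right of \(\tfrac12-\omega\). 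The reduction to the one-dimensional program via the parametrization \((s,1-x-s,x-s,s)\) of line posteriors is also a clean, correct formalization of the Bayes-plausibility bookkeeping that the paper leaves implicit. In short: same route, but your version is a genuine completion of the paper's sketch rather than a paraphrase of it.
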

\begin{proof}
    Directly, \(\frac{\partial}{\partial \kappa} D\left(x\right) > 0\). By construction, when \(\kappa = \bar{\kappa}\), the equilibrium \(\lambda^* = 2 \omega\). Moreover, as we noted in Claim \ref{claim51}, \(V\left(\frac{1 - \lambda}{2}, \frac{1+\lambda}{2}\right) \geq V\left(\frac{1}{2}, \frac{1}{2}\right)\). If this is an equality then \(\ubar{\kappa} = \bar{\kappa}\). If this inequality is strict then, by the intermediate value theorem, there is an interval of \(\kappa\)s (\(\left[\ubar{\kappa}, \bar{\kappa}\right]\)) for which line tangent to \(V\left(x-\omega, x+\omega\right)\) lies above \(V\left(x, 1-x\right)\) at \(\frac{1}{2}\).
\end{proof}
\end{proof}

\subsection{Proposition \ref{firstcomparativestat}}\label{proof:firstcomparativestat}
\begin{proof}
    Thanks to the discussion in the text, we need only consider the case \(\kappa \geq \bar{\kappa}\). In this case, the consumer's payoff at equilibrium is 
    \[\frac{1}{2} + \underbrace{\frac{\lambda}{2} - \mathbb{E}\left[p\right] + \int_{\ubar{p}}^{\tilde{p}}\left(1-\Gamma_{H}\left(p+\lambda\right)\right)\Gamma_{L}\left(p\right)dp}_{S\left(\lambda\right)} - \kappa c\left(\frac{1 - \lambda}{2},\frac{1+\lambda}{2}\right)\text{.}\]
    The derivative of this with respect to \(\lambda\) is \(S'\left(\lambda\right) - \frac{1}{2}\kappa\left(c_y\left(\frac{1 - \lambda}{2},\frac{1+\lambda}{2}\right) - c_x\left(\frac{1 - \lambda}{2},\frac{1+\lambda}{2}\right)\right)\). However, by the Optimality Equation \ref{sun}, \(c_y\left(\frac{1 - \lambda}{2},\frac{1+\lambda}{2}\right) - c_x\left(\frac{1 - \lambda}{2},\frac{1+\lambda}{2}\right) = 2 P_1\left(\lambda\right) - 1\). Summing everything up and simplifying, we obtain \(-1 - \sqrt{2} < 0\).
\end{proof}

\subsection{Theorem \ref{secondmaintheorem} and Corollary \ref{efficientcorollary} Proofs}\label{proof:secondmaintheorem}
\begin{proof}
    For convenience, define \(\ubar{p} \coloneqq \frac{q}{1-2q}\lambda\). The consumer's payoff as a function of the realized posterior belief \(x\) is (restricting attention to \(y \geq x\) by symmetry)\[V\left(x,y\right) = \begin{cases}
y - \mathbb{E}\left[p\right]-\kappa c\left(x,y\right), \quad &y \geq x + \lambda\\
y - \mathbb{E}\left[p\right] + U\left(z\right) - \kappa c\left(x,y\right), \quad &x+\lambda \geq y \geq x
\end{cases} \text{ ,}\]
where \(z \coloneqq y-x\) and
\[\mathbb{E}\left[p\right] = \int_{\ubar{p}}^{\ubar{p}+\lambda}pd\Phi\left(p\right) \text{ ,}\] and
\[U\left(z\right) \coloneqq  \int_{\ubar{p}+z}^{\ubar{p}+\lambda}\left(p-z\right)\Phi\left(p-z\right)d\Phi\left(p\right) - \int_{\ubar{p}}^{\ubar{p}+\lambda-z}p\left(1-\Phi\left(p+z\right)\right)d\Phi\left(p\right) \text{ .}\]

For this to be an equilibrium, we need for there to be a line \(\alpha x + \beta\) lying everywhere above \(V\left(x, 1-x\right)\) on \(0 \leq x \leq \frac{1}{2}\), and intersecting \(V\left(x, 1-x\right)\) at \(\frac{1 - \lambda}{2}\) and \(\frac{1}{2}\). Removing \(\mathbb{E}\left[p\right]\) since it is a constant, we compute 
\[V\left(\frac{1}{2}, \frac{1}{2}\right) = \frac{1}{2} -\frac{\lambda\left(1-q\right)q \left(\log\left(\frac{q}{1-q}\right)-4q+2\right)}{\left(1-2q\right)^{3}}\text{.}\] Moreover, \(\alpha = \kappa c_y\left(\frac{1 - \lambda}{2}, \frac{1+\lambda}{2}\right) - \kappa c_x\left(\frac{1 - \lambda}{2}, \frac{1+\lambda}{2}\right)-1\).

We need \[\left(\kappa c_y\left(\frac{1 - \lambda}{2}, \frac{1+\lambda}{2}\right) - \kappa c_x\left(\frac{1 - \lambda}{2}, \frac{1+\lambda}{2}\right)-1\right) \left(\frac{1 - \lambda}{2}\right) + \beta = \frac{1+\lambda}{2} - \kappa c\left(\frac{1 - \lambda}{2}, \frac{1+\lambda}{2}\right)\text{,}\]
or
\[\beta = 1 - \kappa c\left(\frac{1 - \lambda}{2}, \frac{1+\lambda}{2}\right) - \left(\kappa c_y\left(\frac{1 - \lambda}{2}, \frac{1+\lambda}{2}\right) - \kappa c_x\left(\frac{1 - \lambda}{2}, \frac{1+\lambda}{2}\right)\right) \left(\frac{1 - \lambda}{2}\right)\text{.}\]
We also need 
\[\left(\kappa c_y\left(\frac{1 - \lambda}{2}, \frac{1+\lambda}{2}\right) - \kappa c_x\left(\frac{1 - \lambda}{2}, \frac{1+\lambda}{2}\right)-1\right) \frac{1}{2} + \beta = \frac{1}{2} -\frac{\lambda\left(1-q\right)q \left(\log\left(\frac{q}{1-q}\right)-4q+2\right)}{\left(1-2q\right)^{3}}\text{,}\]
or
\[\beta = 1 -\frac{\lambda\left(1-q\right)q \left(\log\left(\frac{q}{1-q}\right)-4q+2\right)}{\left(1-2q\right)^{3}} - \frac{1}{2}\left(\kappa c_y\left(\frac{1 - \lambda}{2}, \frac{1+\lambda}{2}\right) -  \kappa c_x\left(\frac{1 - \lambda}{2}, \frac{1+\lambda}{2}\right)\right) \text{.}\]
Equating the \(\beta\)s, we get
\[\tag{\Moon}\label{pinninglambda}\omega \underbrace{\frac{\left(1-q\right) \left(\log\left(\frac{q}{1-q}\right)-4q+2\right)}{\left(1-2q\right)^{3}}}_{\eqqcolon t\left(\lambda\right)} + \kappa\underbrace{\left[\lambda d'\left(\lambda\right) - d\left(\lambda\right)\right]}_{v\left(\lambda\right)} = 0\text{,}\]
where
\[d\left(\lambda\right) \coloneqq c\left(\frac{1 - \lambda}{2}, \frac{1+\lambda}{2}\right)\text{,}\]
and where we used the fact that \(t \equiv t\left(\lambda\right)\) (as \(q = \frac{\omega}{\lambda}\)).

Furthermore, note that we must have \(2 \omega \leq \lambda \leq 1\). Directly, \(t'\left(\lambda\right) < 0\) and \(t\left(\lambda\right) < 0\) for all \(\lambda \in \left[2\omega,1\right]\), and \(\lim_{\lambda \downarrow 2 \omega} t\left(\lambda\right) = 0\). Moreover, by the strict convexity of \(c\),  \(v\left(\lambda\right) > 0\). Likewise, \(v'\left(\lambda\right) = \lambda d''\left(\lambda\right) > 0\), and \(\lim_{\lambda \uparrow 1}v'\left(\lambda\right) = \infty\). Continuing along these lines, it is easy to compute that \(t'\left(\lambda\right)\) is bounded for all \(\lambda \in \left(2\omega, 1\right]\). Finally, it is straightforward to check that \(\lim_{\lambda \uparrow 1}v\left(\lambda\right) = \infty\).

From the observations in the previous paragraph, we conclude the following:
\begin{enumerate}[label={(\roman*)},noitemsep,topsep=0pt]
    \item If \(\kappa\) is sufficiently small, then a unique solution \(\lambda^*\left(\kappa\right)\) to Equation \ref{pinninglambda} exists.
    \item In this unique solution, \(\lambda^*\) is strictly decreasing in \(\kappa\).
    \item As \(\kappa \downarrow 0\), \(\lambda^* \uparrow 1\).
\end{enumerate}
This last item is the stated corollary (\ref{efficientcorollary}).\end{proof}
\subsection{Proposition \ref{firstmaintheoremanalog} Proof}\label{proof:firstmaintheoremanalog}
\begin{proof}
    This result is an immediate implication of the fact that in the proof of Theorem \ref{firstmaintheorem}, we show that as \(\kappa\) increases, \(\lambda\) decreases and in the limit goes to \(0\). For any prior as specified in this section, there is a threshold \(\lambda > 0\) such that the comparison shopping with uniform two-point support distribution is a fusion of the prior. That we can use the same price-function approach follows from \cite*{persuasionduality}. Alternatively, as \cite*{kleiner2023extreme} establish, these distributions are exposed (in their parlance, ``strongly exposed'') points in the set of finitely-supported fusions of the prior. When frictions are high (\(\kappa \geq \bar{\kappa}\)), the associated power diagram is the trivial power diagram consisting of a single element (\(\left[0,1\right]^2\)). When frictions are moderate (\(\kappa \in \left[\ubar{\kappa},\bar{\kappa}\right]\)), the associated power diagram is convex partitional, with two elements, the triangles \(\Delta^1\) and \(\Delta^2\) (Expressions \ref{delta1} and \ref{delta2}).
\end{proof}
\subsection{Lemma \ref{auxlemma} Proof}\label{proof:auxlemma}
\begin{proof}
    Consider an arbitrary equilibrium and let \(v \geq 0\) be the infimum of the support of firm \(1\)'s distribution over prices. Naturally, then, \(v + \lambda\) must be the infimum of the support of firm \(2\)'s distribution over prices. Thus, the consumer's net payoff is weakly less than \(\max\left\{y-v-\lambda, x-v\right\} = x- v \leq x\).
\end{proof}
\subsection{Proposition \ref{observbad} Proof}\label{proof:observbad}
\begin{proof}
    By Lemma \ref{auxlemma}, the consumer's payoff at any \(\left(x,y\right)\) is bounded above by \(\min\left\{x,y\right\}\), which is weakly concave. For \(\kappa > 0\) this function is strictly concave.
\end{proof}

\end{document}